\def\CC {{\mathbb C}}     
\def\II {{\mathbb I}}     
\def\NN {{\mathbb N}}     
\def\PP {{\mathbb P}}     
\def\XX {{\mathbb X}}     
\def\YY {{\mathbb Y}}     
\def\lo  {\longmapsto}
\def\Lw  {\Longrightarrow}
\def\lw  {\longrightarrow}
\def\mc {\mathcal}
\def\ol  {\overline}
\def\rw  {\rightarrow}
\def\ul  {\underline}
\newtheorem{theorem}{Theorem}[section]
\newtheorem{lemma}[theorem]{Lemma}
\newtheorem{prop}[theorem]{Proposition}
\newtheorem{coro}[theorem]{Corollary}
\newtheorem{rem}{Remark}[section]
\newtheorem{df}{Definition}[section]
\begin{document}
\global\long\global\long\global\long\def\bra#1{\mbox{\ensuremath{\langle#1|}}}
\global\long\global\long\global\long\def\ket#1{\mbox{\ensuremath{|#1\rangle}}}
\global\long\global\long\global\long\def\bk#1#2{\mbox{\ensuremath{\ensuremath{\langle#1|#2\rangle}}}}
\global\long\global\long\global\long\def\kb#1#2{\mbox{\ensuremath{\ensuremath{\ensuremath{|#1\rangle\!\langle#2|}}}}}

\title{A link between Quantum Entanglement, Secant varieties
and Sphericity}

\author{Adam Sawicki$^{1,\,2}$ and Valdemar V. Tsanov$^{3}$
\\ \\
$^1$School of Mathematics, University of Bristol, \\
University Walk, Bristol BS8 1TW, UK \\ \\
$^2$Center for Theoretical Physics, Polish Academy of Sciences\\
Al. Lotnik\'ow 32/46, 02-668 Warszawa, Poland \\ \\
$^3$Fakult\"at f\"ur Mathematik, Ruhr-Universit\"at Bochum,\\
D-44780 Bochum, Germany }

\maketitle

\abstract{In this paper, we shed light on relations between three concepts studied in representations theory, algebraic geometry and quantum information theory. First - spherical actions of reductive groups on projective spaces. Second - secant varieties of homogeneous projective varieties, and the related notions of rank and border rank. Third - quantum entanglement. Our main result concerns the relation between the problem of the state reconstruction from its reduced one-particle density matrices and the minimal number of separable summands in its decomposition. More precisely, we show that sphericity implies that states of a given rank cannot be approximated by states of a lower rank. We call states for which such approximation is possible \textit{exceptional states}. For three, important from quantum entanglement perspective cases of distinguishable, fermionic and bosonic particles, we also show that non-sphericity implies the existence of exceptional states. Remarkably, the exceptional states
belong to non-bipartite entanglement classes. In particular, we show that the $W$-type states and their appropriate modifications are exceptional states stemming from the second secant variety for three cases above. We point out that the existence of the exceptional states is a physical obstruction for deciding the local unitary equivalence of states by means of the one-particle reduced density matrices. Finally, for a number of systems of distinguishable particles with known orbit structure we list all exceptional states and discuss their possible importance in entanglement theory. }

\section{Introduction}

We consider a projective algebraic variety given as $\XX\subset\PP(\mc H)$, where ${\mc H}$ is the state space of a composite system and $\XX$ consists of the coherent states of the system. The systems we consider are:

\begin{enumerate}
	\item $L$ distinguishable particles with the state space $\mc H_D=\mc H_1\otimes ...\otimes \mc H_L$,
	\item $L$ bosons (symmetric indistinguishable particles) with the state space $\mc H_B=S^L(\mc H_1)$, where $\mc H_1$ is one-boson space,
	\item $L$ fermions (antisymmetric indistinguishable particles) with the state space $\mc H_F=\bigwedge^L\mc H_1$, where $\mc H_1$ is one-fermion space.
\end{enumerate}

Throughout the paper states will be understood as points in the projective space rather than vectors in the Hilbert space. Physically, it means that we neglect the norm and global phase of vectors. The respective varieties of coherent states, compact symmetry groups and their complexifications, which are local unitary and invertible SLOCC operations, respectively, are:
\begin{enumerate}
	\item The Segre embedding of $\PP(\mc H_1)\times...\times\PP(\mc H_L)$ into $\PP({\mc H}_D)$, $K_D=SU(N_1)\times \ldots \times SU(N_L)$, $G_D=K_D^\mathbb{C}=SL(\mc H_1)\times...\times SL(\mc H_L)$,
	\item The Veronese embedding of $\PP(\mc H_1)$ into $\PP({\mc H}_B)$, $K=SU(N)$, $G=K^\mathbb{C}= SL(\mc H_1)$,
	\item The Pl\"ucker embedding of the Grassmannian $Gr(L,\mc H_1)$ into $\PP({\mc H}_F)$, $K=SU(N)$, $G=K^\mathbb{C}= SL(\mc H_1)$.
\end{enumerate}

The local unitary operations represent the unitary operations which are performed on each particle separately, i.e. the local unitary dynamics. The SLOCC operations are more general and in addition allow classical communication. In all these cases $\XX$ spans $\PP(\mc H)$, i.e. every state can be written as a linear combination of states from $\XX$. The {\it rank} of a state is then defined as the minimal number of summands in such a linear combination. An obvious, but important property is that the rank is invariant under the symmetries of the system.

Throughout the paper, when convenient, we will use Dirac bra-ket notation. The basis of $\mathbb{C}^N$ will be denoted by $\{\ket{0},\ldots,\ket{N-1}\}$. The tensor $\ket{i}\otimes\ldots\otimes\ket{j}$ will typically be written as $\ket{i\ldots j}$.

The notion of rank and the related secant varieties, which are introduced here in section \ref{Sec Rank&Sec}, have been studied for a long time and from various perspectives - algebraic geometry, representation theory, computational complexity, algebraic statistics, etc., see e.g. \cite{AOP-2009,Bucz-Lands-2011,CGG-2002,Landsberg-2012-book} and the references therein. Much is known and much is still a mystery. We intend to show that these notions are also relevant in the study of quantum entanglement from the physics side and spherical varieties from the mathematical side.

The phenomenon which we focus on is the following. In some cases, states of a given rank can be approximated by states of lower rank. We will call such states {\it exceptional}, as to distinguish them from the prototypical case: the matrices, where the set of matrices whose rank doesn't exceed a given $r$ is a closed set, so approximation to higher rank is impossible. We show that exceptional states occurs for most systems of the above mentioned types. In fact, they do not occur if and only if $L=2$. Furthermore, the spaces $\mc H_1\otimes \mc H_2$, $S^2(\mc H_1)$, $\bigwedge^2\mc H_1$ are exactly the ones, where the complex symmetry group $G$ acts spherically on $\PP ({\mc H})$. In fact, for the tensor representations we are interested in, we observe that the absence of exceptional states is equivalent to sphericity of the group action. The central result is formulated as Theorem \ref{Theo Spher<->no except}. Theorem \ref{Theo Spher<->no except} provides a new definition of sphericity for the considered
systems. Thus, the rather non-physical notion of a dense Borel group orbit is changed to a state-type obstruction.

Remarkably, the exceptional states we find, in all three cases, belong to non-bipartite SLOCC entanglement classes. More precisely, we show that $W$-type states are exceptional. For example for the system of three qubits $W=\frac{1}{3}(\ket{001}+\ket{010}+\ket{001})$, is exceptional. Combining the results of the current paper with the recently established connection between sphericity and local unitary equivalence of states \cite{SawHuckKus2012} we conclude in Theorem \ref{lu-result} that the existence of the exceptional states can be regarded as a physical, state-type obstruction for deciding the local unitary equivalence of states by means of the one-particle reduced density matrices.

From point of view of representation theory, it is natural to ask whether the equivalence between sphericity and absence of exceptional states persists in a more general situation. The spherical actions of reductive groups on projective spaces are completely classified, see \cite{Knop-MultFree-1998}. A case by case analysis using the list of \cite{Knop-MultFree-1998} shows that sphericity implies absence of exceptional states. This implication can also be deduced from a result of \cite{Bucz-Lands-2011}, where it is shown that rank and border rank coincide for a class of homogeneous projective varieties (subcominuscule varieties); the latter class is exactly the class obtained from spherical representations, taking account of the fact that sometimes several spherical representations give rise to the same variety of coherent states. However, the converse implication does not hold in general, i.e. there are irreducible representations of reductive groups, where exceptional states do not occur, and the action on
the projective space is not spherical. Perhaps the simplest example is given by the adjoint representation of $SL_3\CC$. The conclusion is that we should regard the equivalence between sphericity and lack of exceptional states as a phenomenon present in the setting of quantum entanglement, but not beyond.

The fact that exceptional states do occur for systems with more than two particles can be qualified, from the point of view of algebraic geometry, as folklore. All necessary results can be found in \cite{Landsberg-2012-book}. The value of this work, as we see it, is not so much in the proofs, but rather in exhibiting a connection between three separately well studies notions: rank, entanglement types and sphericity. Such a connection could, hopefully, lead to interactions and new developments.

The paper is organized as follows. We start with two simple motivating examples. The first describes a situation where exceptional states are absent and the second - where they are present. Next, in section \ref{Sec Rank&Sec} we give a rigorous definitions of rank, border rank and secant varieties and discuss some of their basic properties. In section \ref{Sec:A_short_overview} we present some known results about secant varieties for distinguishable particles, bosons and fermions. The sections 6 and 7 consist of the formulation and the proof of the main results of the paper. In section 8 we give a list of all exceptional states which appear for systems with the explicitly known orbit structure.

\section{Motivation - a simple example}

Before we proceed with the general definitions and results, we present two examples: one where exceptional states do not occur - the case of two qubits, and one where they do occur - three qubits. These systems are well studied from many aspects. Remarkably, in the theory of quantum entanglement, the exceptional state for three qubits is so-called W state.

\paragraph{Two qubits}

Consider two complex vector spaces of dimension 2, $\mc H_1\cong \mc H_2\cong \CC^2$. The tensor product ${\mc H}_D=\mc H_1\otimes \mc H_2$, or more precisely, its projective space $\PP=\PP(\mc H_D)$ is the state space for a system of two qubits. This state space carries an action of the compact group $K_D=SU(2)\times SU(2)$ and also of its complexification $G_D=SL_2(\CC)\times SL_2(\CC)$.  We concentrate on the action of $G_D$.

There are exactly two $G_D$-orbits in $\PP$. The first consists of all simple tensors (separable states), i.e. tensors which can be written\footnote{We denote by $[\psi]$ the point in $\PP(\mc H)$ corresponding to a vector $\psi\in{\mc H}\setminus\{0\}$.} as $[v\otimes v']$ with $v\in \mc H_1$, $v'\in \mc H_2$. The second orbit consists of all non-simple tensors. It turns out that any state $[\psi]$ in the second orbit can be written as a sum of two simple tensors. More precisely, there exist bases $v_1,v_2$ of $\mc H_1$ and $v_1',v_2'$ of $\mc H_2$ so that $\psi=v_1\otimes v_1' + v_2\otimes v_2'$. Let us denote the first orbit by $\XX$, so that the second one is $\PP\setminus \XX$. The set $\XX$ is actually a well know algebraic variety, called the Segre embedding\footnote{Incidentally, $\XX$ is also a quadric hypersurface, but we regard it here as a Segre embedding, since this is what we want to consider in a more general situation.} of $\PP^1\times \PP^1$.

The basic notion considered in this paper is the notion of {\it rank} of states. In this terminology, the states of the first orbit will be said to have rank 1 and the states of the second orbit to have rank 2, according to the minimal number of simple tensors necessary to express a given state. This terminology has a classical origin: the space $\mc H_1\otimes \mc H_2$ can be interpreted as the space of $2\times2$ matrices with complex entries and we recover the standard notion of rank of a matrix.

\paragraph{Three qubits}\label{Sec Motiv 3 qubits}
Let $\mc H_D=\mc H_1\otimes \mc H_2\otimes \mc H_3$, with $\mc H_j\cong \CC^2$. Let $\PP=\PP(\mc H_D)$. Let $G=SL(\mc H_1)\times SL(\mc H_2)\times SL(\mc H_3)$ act naturally on $\PP$. Let $\XX\subset\PP$ denote the variety of simple tensors; this is the Segre embedding of $\PP^1\times\PP^1\times\PP^1$ into $\PP$. Let $e_1,e_2$ be a basis of $\CC^2$. Then the orbits of $G$ in $\PP$ are the following:
\begin{gather*}
\begin{array}{ll}
\XX = \XX_1 & = G[e_1\otimes e_1\otimes e_1] \\
\XX_2^{\rm I} & = G[e_1\otimes e_1\otimes e_1 + e_2\otimes e_2\otimes e_2] \\
\XX_2^{\rm II} & = G[e_1\otimes e_1\otimes e_1 + e_1\otimes e_2\otimes e_2] \\
\XX_2^{\rm III} & = G[e_1\otimes e_1\otimes e_1 + e_2\otimes e_1\otimes e_2] \\
\XX_2^{\rm IV} & = G[e_1\otimes e_1\otimes e_1 + e_2\otimes e_2\otimes e_1] \\
\XX_3 & = G[e_1\otimes e_1\otimes e_2 + e_1\otimes e_2\otimes e_1 + e_2\otimes e_1\otimes e_1] \\
\end{array}
\end{gather*}
The lower index indicates rank. The upper index for $\XX_2$ distinguishes the four different orbits of rank two; observe that the cases II,III,IV differ only by permutation of the indices; these are so-called bi-separable states. It is not hard to see that the orbit $\XX_2^{\rm I}$ is open and dense in $\PP$. Hence all states can be approximated by states from $\XX_2^{\rm I}$. In particular, this is true for the state
$$
W = [e_1\otimes e_1\otimes e_2 + e_1\otimes e_2\otimes e_1 + e_2\otimes e_1\otimes e_1] \;.
$$
We say that $W$ has border-rank 2. However, it can be checked directly that $W$ cannot be written as a sum of two simple tensors, so its rank is equal to 3. Thus $W$ is an exceptional state - it can be approximated by states of lower rank. Remarkably, at the same time, the states $e_1\otimes e_1\otimes e_1 + e_2\otimes e_2\otimes e_2$ and  W are known to represent the two genuine entanglement classes of three qubits.

\section{Rank of states and secant varieties: general definitions}\label{Sec Rank&Sec}

After the simple motivating examples given above, we proceed with the rigorous definitions of the central notions of this article: rank, border rank and secant varieties. We also state some of their basic properties. A detailed treatment can be found in \cite{Landsberg-2012-book}, along with an account of the state of affairs for the cases concerning us here, from point of view of representation theory and algebraic geometry.

Let ${\mc H}$ denote a complex vector space of dimension $N$ and $\PP=\PP(\mc H)$ denote the associated projective space. If $v\in{\mc H}$ is a nonzero vector, we shall denote by $[v]$ its image in $\PP$.

\paragraph{Rank and maximal rank} \label{subsec:Rank_and_maximal_rank}

Let $\XX\subset\PP$ be an algebraic variety and let $\hat{\XX}\subset{\mc H}$ denote the affine cone over $\XX$. Then $\mathrm{span}\hat{\XX}$ is a linear subspace of ${\mc H}$. We denote by $\PP_\XX=\PP(\mathrm{span}\hat{\XX})$ the corresponding projective subspace of $\PP$. We say that $\XX$ spans $\PP$ if $\PP_\XX=\PP$; this is equivalent to the requirement that $\hat{\XX}$ contains a basis of ${\mc H}$. Assume that this is the case. Then every point in ${\mc H}$ can be written as a linear combination of points in $\hat{\XX}$. This allows us to define the following notion of rank of a vector with respect to $\XX$, for non-zero $\psi\in{\mc H}$,
\begin{equation}
{\rm rk}[\psi] = {\rm rk}_\XX[\psi] = \min \{r\in\NN : \psi={x}_1+\dots +{x}_r \;{\rm with}\; [x_j]\in\XX\} \;.
\end{equation}
\noindent The sets
\begin{equation}\label{rank-subset}
\XX_r = \{ [\psi]\in\PP : {\rm rk}[\psi] = r \}\,,\quad r = 1,2,...,
\end{equation}
\noindent are called the rank subsets of $\PP$ with respect to $\XX$. Since $\XX$ spans $\PP$, we have $\XX_r=\emptyset$ for $r>N$. In the following proposition we state some basic properties of the rank subsets. We skip the formal proof, since all properties follow immediately from the definitions.

\begin{prop}\label{Prop Properties X_r}
The following properties hold:

{\rm (i)} $\XX_1=\XX$.

{\rm (ii)} There exists $r_m\in\{1,...,N\}$, such that $\XX_{r_m}\ne \emptyset$ and $\XX_{r}=\emptyset$ for $r>r_m$.\\
\indent\indent\, The number $r_m$ is called the {\rm maximal rank} of $\PP$ with respect to $\XX$.

{\rm (iii)} If $r\in\{1,...,r_m\}$, then $\XX_{r}\ne \emptyset$.

{\rm (iv)} The projective space $\PP$ decomposes as a disjoint union $\PP=\XX_1\sqcup \dots \sqcup \XX_{r_m}$.
\end{prop}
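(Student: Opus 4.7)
The plan is to verify each clause directly from the definition of rank, with (iii) the only item that requires a short argument.

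For (i), $[\psi]\in\XX_1$ means $\psi=x_1$ for a single $x_1\in\hat\XX$, which is equivalent to $[\psi]\in\XX$; so $\XX_1=\XX$. For (ii), the hypothesis that $\XX$ spans $\PP$ provides a basis of $\mc H$ contained in $\hat\XX$, and since $\hat\XX$ is a cone the scalar multiples $a_i v_i$ of basis vectors still lie in $\hat\XX$; hence every nonzero $\psi$ is a sum of at most $N$ elements of $\hat\XX$, giving ${\rm rk}[\psi]\le N$. The set of ranks that actually occur is therefore a nonempty finite subset of $\{1,\ldots,N\}$ (it contains $1$ by (i), since $\XX\ne\emptyset$), so its maximum $r_m$ exists, satisfies $\XX_{r_m}\ne\emptyset$ by definition and $\XX_r=\emptyset$ for $r>r_m$ by maximality. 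Clause (iv) is the observation that the rank function $\PP\to\{1,\ldots,r_m\}$ is single-valued, so the subsets $\XX_r$ partition $\PP$.

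The content of (iii) is that no rank in $\{1,\ldots,r_m\}$ is skipped, which I would prove by downward induction starting at $r=r_m$. Suppose $\XX_{r+1}\ne\emptyset$ with $1\le r$ and $r+1\le r_m$, pick $[\psi]\in\XX_{r+1}$, and fix a minimal decomposition $\psi=x_1+\cdots+x_{r+1}$ with $[x_j]\in\XX$. Set $\psi_r=x_1+\cdots+x_r$. First, $\psi_r\ne 0$: otherwise $\psi=x_{r+1}$ would have rank $1$, contradicting $r+1\ge 2$. Hence ${\rm rk}[\psi_r]$ is defined and is at most $r$. Second, if ${\rm rk}[\psi_r]=r'<r$, a decomposition $\psi_r=y_1+\cdots+y_{r'}$ would yield $\psi=y_1+\cdots+y_{r'}+x_{r+1}$, a decomposition of length $r'+1<r+1$, contradicting the rank of $\psi$. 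Therefore ${\rm rk}[\psi_r]=r$ and $\XX_r\ne\emptyset$, completing the induction step.

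I expect no real obstacle here; the only subtle point is the non-vanishing $\psi_r\ne 0$ needed before invoking ${\rm rk}[\psi_r]$, which is handled by the minimality of the chosen decomposition of $\psi$. With that observation in place, the downward induction settles (iii), and together with (i), (ii), (iv) the proposition follows, as the authors note, essentially from the definitions.
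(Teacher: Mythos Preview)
Your proof is correct and in the spirit of what the paper intends: the authors explicitly skip the formal proof, remarking that all four properties follow immediately from the definitions, and your argument supplies precisely those verifications. The only part requiring any work is (iii), and your downward induction---dropping one summand from a minimal decomposition and using minimality both to ensure the truncated vector is nonzero and to rule out a shorter decomposition---is the natural way to show no rank is skipped.
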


\paragraph{Secant varieties, typical rank}

Let $r\in\{2,...,r_m\}$ and $\XX_r\subset\PP$ be as defined in (\ref{rank-subset}). Let us first note that the subset $\XX_r\subset\PP$ is not closed. It can be easily seen, because we have $\XX\subset\ol{\XX_r}$ and $\XX\nsubseteq\XX_r$\footnote{Here and in what follows we use $\ol{S}$ to denote the Zariski closure of a subset $S\subset\PP$.}. The $r$-th secant variety of $\XX$ is defined as
\begin{equation}
\sigma_r(\XX) = \ol{\bigcup\limits_{x_1,...,x_r\in\XX} \PP_{x_1...x_r} } \;,
\end{equation}
where $\PP_{x_1...x_r}$ stands for the projective subspace of $\PP$ spanned by the points $x_1,...,x_r$. It can also be written as
\begin{equation}
\sigma_r(\XX) = \ol{\bigsqcup\limits_{s\leq r}\XX_s} \subset\PP \;.
\end{equation}
In the proposition below we state some easy to verify, basic properties of secant varieties.

\begin{prop} The following properties hold:

{\rm (i)} $\sigma_1(\XX)=\XX_1=\XX$.

{\rm (ii)} $\sigma_r(\XX)\subset\sigma_{r+1}(\XX)$.

{\rm (iii)} There exists a minimal $r_g\in\{1,...,r_m\}$ such that $\sigma_{r_g}(\XX)=\PP$ and $\sigma_{r_g-1}(\XX)\ne\PP$. \\
            \indent \indent \, This number is called the {\rm typical rank} of $\PP$ with respect to $\XX$.

{\rm (iv)} For $r\in\{1,...,r_g\}$ the rank subset $\XX_r$ is Zariski open and dense in $\sigma_r(\XX)$ and we\\
 \indent \indent \, have $\sigma_r(\XX)=\ol{\XX_r}$.
\end{prop}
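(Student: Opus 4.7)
Parts (i)--(iii) should fall out directly from the definitions. For (i), the projective subspace spanned by a single point $x$ equals $\{x\}$, so $\sigma_1(\XX)=\ol{\bigcup_{x\in\XX}\{x\}}=\ol{\XX}=\XX$ since $\XX$ is assumed closed, and combined with Proposition~\ref{Prop Properties X_r}(i) this gives $\sigma_1(\XX)=\XX_1=\XX$. For (ii), the equivalent formula $\sigma_r(\XX)=\ol{\bigsqcup_{s\leq r}\XX_s}$ makes the nesting transparent. For (iii), the ascending chain $\sigma_1(\XX)\subseteq\sigma_2(\XX)\subseteq\cdots$ consists of closed subvarieties of the Noetherian space $\PP$; using Proposition~\ref{Prop Properties X_r}(iv) we obtain $\sigma_{r_m}(\XX)=\ol{\bigsqcup_{s\leq r_m}\XX_s}=\ol{\PP}=\PP$, so the minimum $r_g$ with $\sigma_{r_g}(\XX)=\PP$ is well defined.

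For the content of (iv), I would work with the join morphism
$$\phi_r\colon\hat\XX\times\cdots\times\hat\XX\to\mc H,\qquad(x_1,\dots,x_r)\mapsto x_1+\cdots+x_r,$$
whose projective image is precisely the set of points of rank at most $r$. By Chevalley's theorem this image is constructible with Zariski closure $\sigma_r(\XX)$; assuming $\XX$ irreducible (true for the Segre, Veronese and Pl\"ucker embeddings studied here), $\sigma_r(\XX)$ inherits irreducibility from the source. For $r\leq r_g$, the defining property of $r_g$ gives $\sigma_{r-1}(\XX)\subsetneq\sigma_r(\XX)$, so $\sigma_{r-1}(\XX)$ is a proper closed subvariety of strictly smaller dimension; hence $V=\sigma_r(\XX)\setminus\sigma_{r-1}(\XX)$ is nonempty, Zariski open and dense in $\sigma_r(\XX)$. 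Constructibility provides a Zariski open dense $U\subseteq\sigma_r(\XX)$ contained in the image of $\phi_r$, and each point $y\in U\cap V$ has rank $\leq r$ (being in the image of $\phi_r$) and rank $>r-1$ (being outside $\sigma_{r-1}(\XX)$, which contains the image of $\phi_{r-1}$), hence rank exactly $r$. Thus $\XX_r$ contains the open dense $U\cap V$, which yields $\sigma_r(\XX)=\ol{\XX_r}$ and the density half of (iv).

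The delicate point I expect to be the main obstacle is the precise meaning of ``Zariski open'' for $\XX_r$ itself in (iv). Exactly the exceptional states that the paper will study break the naive equality $\XX_r=\sigma_r(\XX)\setminus\sigma_{r-1}(\XX)$: a rank-$r$ state with border rank strictly below $r$ lies in $\XX_r\cap\sigma_{r-1}(\XX)$, so $\XX_r$ need not be the open complement of a closed subvariety. What the argument above cleanly yields is that $\XX_r$ contains a Zariski open dense subset of $\sigma_r(\XX)$, which should be taken as the intended content of the statement; strict openness of $\XX_r$ in $\sigma_r(\XX)$ would require an upper semicontinuity statement for rank, which is precisely what fails in the non-spherical setting treated in the later sections of the paper.
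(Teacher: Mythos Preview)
The paper does not actually prove this proposition; it is stated as a collection of ``easy to verify, basic properties of secant varieties'' with no argument given. So there is no approach in the paper to compare yours against.

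Your treatment of (i)--(iii) and of the density claim in (iv) is correct and is the standard argument: the join map $\phi_r$ has constructible image with closure $\sigma_r(\XX)$, irreducibility is inherited from $\hat\XX^r$, and for $r\le r_g$ the strict inclusion $\sigma_{r-1}(\XX)\subsetneq\sigma_r(\XX)$ lets you extract a nonempty open $U\cap V\subset\XX_r$, whence $\ol{\XX_r}=\sigma_r(\XX)$.

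Your closing caveat is not merely a delicacy---it is a genuine correction to the paper's wording. The claim that $\XX_r$ itself is Zariski open in $\sigma_r(\XX)$ is false in general, already for three qubits at $r=2=r_g$. There $\sigma_2(\XX)=\PP$, and $\XX_2=\XX_2^{\rm I}\sqcup\XX_2^{\rm II}\sqcup\XX_2^{\rm III}\sqcup\XX_2^{\rm IV}$ in the paper's notation. The complement $\PP\setminus\XX_2=\XX\sqcup\XX_3$ is not closed: the biseparable orbits $\XX_2^{\rm II,III,IV}$ lie in the closure of the $W$-orbit $\XX_3$, so any biseparable point is a limit of rank-$3$ states and hence not interior to $\XX_2$. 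Thus $\XX_2$ is dense but not open in $\sigma_2(\XX)$. Your reformulation---that $\XX_r$ \emph{contains} a Zariski open dense subset of $\sigma_r(\XX)$, equivalently $\ol{\XX_r}=\sigma_r(\XX)$---is the correct statement, and it is all that is used downstream in the paper.
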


\paragraph{Border rank and exceptional states}

As we have already seen for three qubits, it may happen that a state with a given rank can be approximated by states of strictly lower rank. Here we introduce the notion of a border rank which captures this kind of behavior. Let $[\psi]\in\PP$. The border rank of $[\psi]$ with respect to $\XX$ is defined as
$$
\ul{\rm rk}[\psi] = \ul{\rm rk}_\XX[\psi] = \min\{r\in\NN:[\psi]\in\ol{\XX_r}\} \;.
$$

\begin{prop} The following properties hold:

{\rm (i)} ${\rm rk}[\psi]\geq\ul{\rm rk}[\psi]$.

{\rm (ii)} $\ul{\rm rk}[\psi] = \min\{r\in\NN:[\psi]\in\sigma_r(\XX)\}$.
\end{prop}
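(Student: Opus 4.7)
The plan is to prove both parts by unwinding definitions; neither statement is deep, but (ii) requires the standard fact that closure commutes with finite union, which is where a bit of care is needed.

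For (i), I would argue as follows. Let $r={\rm rk}[\psi]$. By definition of the rank subsets, $[\psi]\in\XX_r\subset\ol{\XX_r}$. The border rank is the minimum of those $s$ for which $[\psi]\in\ol{\XX_s}$, so $\ul{\rm rk}[\psi]\leq r={\rm rk}[\psi]$. That is all there is to (i).

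For (ii), denote $b=\ul{\rm rk}[\psi]=\min\{r:[\psi]\in\ol{\XX_r}\}$ and $c=\min\{r:[\psi]\in\sigma_r(\XX)\}$. First I would show $c\leq b$: from the definition $\sigma_r(\XX)=\ol{\bigsqcup_{s\leq r}\XX_s}$ we have $\ol{\XX_r}\subseteq\sigma_r(\XX)$ for each $r$, so any $r$ witnessing membership in some $\ol{\XX_r}$ also witnesses membership in $\sigma_r(\XX)$. The reverse inequality $b\leq c$ is the step that requires a brief justification: since the index set $\{s:s\leq r\}$ is finite, closure commutes with this union, giving
\[
\sigma_r(\XX)=\ol{\textstyle\bigcup_{s\leq r}\XX_s}=\textstyle\bigcup_{s\leq r}\ol{\XX_s}.
\]
Hence $[\psi]\in\sigma_r(\XX)$ is equivalent to the existence of some $s\leq r$ with $[\psi]\in\ol{\XX_s}$. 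Taking the minimum of such $r$ over all $[\psi]$-containing $\sigma_r(\XX)$ yields the minimum of such $s$ over all $[\psi]$-containing $\ol{\XX_s}$, i.e.\ $c=b$.

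The only thing to watch is that the closure-of-union identity is applied to a \emph{finite} union; this is true both in the Zariski topology and in the classical topology, so the argument goes through unchanged regardless of which closure the reader has in mind. Given the explicit form of $\sigma_r(\XX)$ stated earlier, there is no additional obstacle.
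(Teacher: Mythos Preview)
Your proof is correct. The paper actually states this proposition without proof, treating it as an elementary consequence of the definitions, so there is no argument in the paper to compare against. Your write-up supplies exactly the natural verification: part (i) is immediate from $[\psi]\in\XX_r\subset\ol{\XX_r}$, and for part (ii) the only nontrivial step is the identity $\ol{\bigcup_{s\leq r}\XX_s}=\bigcup_{s\leq r}\ol{\XX_s}$ for a finite union, which you correctly invoke. One could alternatively cite property (iv) of the preceding proposition in the paper, namely $\sigma_r(\XX)=\ol{\XX_r}$ for $r\leq r_g$, which makes (ii) a tautology in that range; your argument has the virtue of being self-contained and not relying on that slightly less obvious fact.
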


\begin{df}
States $[\psi]\in\PP$, for which ${\rm rk}[\psi]\ne \ul{\rm rk}[\psi]$, are called {\rm exceptional}.
\end{df}

Clearly, if $[\psi]$ is exceptional, then $\ul{\rm rk}[\psi]<{\rm rk}[\psi]$, so exceptional states are states which can be approximated by states of lower rank.

\paragraph{Expected dimensions of secant varieties}

We can easily give a set of parameters sufficient to determine a point on $\sigma_r(\XX)$; this gives an upper bound on the dimension of the secant variety. Namely, a generic point in $\sigma_r(\XX)$ is obtained as follows. We take $r$ points on $\XX$, which gives $r\dim\XX$ parameters, and then, assuming that these points are linearly independent, we take a point in the $(r-1)$-dimensional projective space spanned on them. In total we get $r\dim\XX+(r-1)$ parameters. Since there are no obvious relations between these parameters, this number is called the {\it expected dimension} of $\sigma_r(\XX)$, when it does not exceed $N-1=\dim\PP$. It is denoted by
$$
{\rm e}\dim\sigma_r(\XX)=\min\{r\dim\XX+(r-1),N-1\} \;.
$$
If $\dim\sigma_r(\XX)\ne {\rm e}\dim\sigma_r(\XX)$, the secant variety is called defective and the difference between the two dimensions is called the {\it defect}.

With this notion in hand, we can calculate an expected value $r_{eg}$ for the typical rank of $\PP$ with respect to $\XX$, called the expected typical rank. This is the minimal $r$ for which $r\dim\XX+(r-1)\geq N-1$, so
$$
r_{eg} = \left\lceil \frac{N}{\dim\XX+1} \right\rceil \;.
$$
Clearly, $r_g\leq r_{eg}$.

\section{Distinguishable particles, bosons and fermions}\label{Sec Cases of interest}

In the following we show that the algebraic varieties of coherent states, for our three cases of interest, in fact span the respective projective spaces $\PP$. This in turn implies that for distinguishable particles, bosons and fermions there are a well defined notions of rank and border rank.

\paragraph{Distinguishable particles and the Segre variety}\label{Sec Def Distinguishable}

Let $\mc H_1,...,\mc H_L$ be complex vector spaces with finite dimensions $n_j=\dim \mc H_j$. Let $G_D=GL(\mc H_1)\times...\times GL(\mc H_L)$. Let ${\mc H}_D=\mc H_1\otimes...\otimes \mc H_L$ and $\PP=\PP(\mc H_D)$. Then $\dim{\mc H_D}=N=n_1\cdot ...\cdot n_L$ and $\dim\PP=N-1$. The group $G_D$ acts naturally on ${\mc H}
_D$ and this representation is irreducible. There is of course an induced action on $\PP$. Consider the map
\begin{gather*}
\begin{array}{cccc}
\mathrm{Segre} : & \PP(\mc H_1)\times...\times\PP(\mc H_L) & \lw & \PP \\
        & ([v_1], ...,[v_L]) & \lo & [v_1\otimes ...\otimes v_L] \;.
\end{array}
\end{gather*}
It is not hard to see that this map defines a $G_D$-equivariant embedding, called the Segre embedding of $\PP(\mc H_1)\times...\times\PP(\mc H_L)$. We denote the image of the Segre map by $\XX\subset\PP$. This is an algebraic variety, called the Segre variety, it consists of (the projectivizations of) all simple tensors. The Segre variety is the unique closed $G_D$-orbit in $\PP$. In physical terms, $\XX$ is the orbit of coherent or separable states.

The variety $\XX$ spans $\PP$. Indeed, if we fix bases for $\mc H_1,...,\mc H_L$, then the tensor products of the basis vectors form a basis of simple tensors for ${\mc H}_D$. Thus every tensor can be expressed as a linear combination of simple tensors and we have well defined notions of rank and border rank of states with respect to $\XX$, as defined in section \ref{Sec Rank&Sec}.

\paragraph{Bosons and the Veronese variety}

Let $\mc H_1$ be a complex vector space of dimension $n$. Let $G=GL(\mc H_1)$. Let ${\mc H_B}=S^L(\mc H_1)$, for some fixed positive integer $L$, and let $\PP=\PP(\mc H_B)$. Then $\dim {\mc H_B} = N = \binom{L+n-1}{L}$ and $\dim\PP=N-1$. The group $G$ acts naturally on ${\mc H_B}$ and this representation is irreducible. There is an induced action on $\PP$. Consider the map
\begin{gather*}
\begin{array}{cccc}
\mathrm{Ver}_L : & \PP(\mc H_1) & \lw & \PP \\
        & [v] & \lo & [v^L] \;.
\end{array}
\end{gather*}
This map is an embedding, called the $L$-th Veronese embedding of $\PP(\mc H_1)$. Let us denote the image by $\XX\subset\PP$. This is an algebraic subvariety of $\PP$ called the veronese Variety. It consists of all symmetric $L$-tensors over $\mc H_1$ which are powers of 1-tensors. Since $G$ acts transitively on $\PP(\mc H_1)$ and the map $\mathrm{Ver}_L$ is $G$-equivariant, $\XX$ is a single closed $G$-orbit. Since the representation of $G$ on ${\mc H_B}$ is irreducible, $\XX$ is the only closed $G$-orbit in $\PP$.

The variety $\XX$ spans $\PP$. This is, perhaps, not as obvious as in the case of the Segre variety, because the standard basis for $S^L(\mc H_1)$ is the monomial basis obtained from a given basis $\{e_1,...,e_n\}$ of $\mc H_1$. The only monomials which belong to $\XX$ are $[e_j^L]$. The other monomials, like $[e_1e_2^{L-1}]$ do not belong to $\XX$. To see that $\XX$ actually spans $\PP$, observe that, since $\XX$ is a closed $G$-orbit, $\mathrm{span}\XX$ gives a subspace of ${\mc H_B}$ preserved by $G$, i.e. a subrepresentation. But ${\mc H}$ is irreducible, so we must have $\mathrm{span}\XX=\PP$. Hence we have well defined notions of rank and border rank in $\PP$ with respect to $\XX$.

\paragraph{Fermions and the Grassmann variety}\label{Sec Def Anticommuting}

Let $\mc H_1$ be a complex vector space of dimension $n$ and $G=GL(\mc H_1)$. Let $Gr(L,\mc H_1)$ denote the $L$-th Grassmann variety, for some fixed integer $L$ with $1\leq L\leq n$; this is the variety of all $L$-dimensional linear subspaces of $\mc H_1$. Then $Gr(L,\mc H_1)$ carries a natural $G$-action. Let ${\mc H_F}=\bigwedge^L\mc H_1$ and $\PP=\PP(\mc H_F)$. Then $\dim {\mc H_F}=N=\binom{n}{L}$ and $\dim\PP=N-1$. There is a natural linear representation of $G$ on ${\mc H_F}$, which is irreducible. Consider the map
\begin{gather*}
\begin{array}{cccl}
\mathrm{Pl}_L : & Gr(L,\mc H_1) & \lw & \PP \\
     & U & \lo & [u_1\wedge...\wedge u_L] \,,\quad \textrm{where $u_1,...,u_L$ is a basis of $U$} \,.
\end{array}
\end{gather*}
Here $U\subset \mc H_1$ is an $L$-dimensional subspace, i.e. an element of $Gr(L,\mc H_1)$. The image $\mathrm{Pl}_L(U)$ is well defined, i.e. does not depend on the choice of basis, because for two different bases $\{u_j\}$ and $\{u_j'\}$ the exterior products $u_1\wedge...\wedge u_L$ and $u_1'\wedge...\wedge u_L'$ differ only by a scalar, so in the projective space $\PP$ we have $[u_1\wedge...\wedge u_L]=[u_1'\wedge...\wedge u_L']$. The map $\mathrm{Pl}_L$ is a $G$-equivariant embedding, called the Pl\"ucker embedding of the Grassmann variety. Let us denote the image by $\XX=\mathrm{Pl}_L(Gr(L,\mc H_1))$. Then $\XX$ is the unique closed $G$-orbit in $\PP$, it consists of all decomposable antisymmetric tensors in ${\mc H_F}$. The variety $\XX$ spans $\PP$ because it contains the standard basis of ${\mc H_F}$ consisting of the exterior products of $L$-tuples of elements of a given basis of $\mc H_1$. Hence we have a well defined notions of rank and border rank in $\PP$ with respect to $\XX$.

\begin{rem}\label{Rem Wedge^k=Wedge^n-k}
Unless otherwise specified, we shall only consider the cases where $L\leq n/2$. This is sufficient for our purposes, because of an isomorphism $\bigwedge^L\mc H_1\cong\bigwedge^{n-L}\mc H_1$. Such an isomorphism is not canonical, in fact the two spaces are dual to each other. So, we can obtain an isomorphism by choosing a positive definite Hermitian form on $\mc H_1$. Then the mapping $Gr(L,\mc H_1)\rw Gr(L,\mc H_1)$, given by $U\mapsto U^\perp$, is an isomorphism which agrees with the respective Pl\"ucker embeddings. The Hermitian form cannot be $G$-equivariant, but the associated unitary group $K=U(\mc H_1)$ is a maximal compact subgroup of $G$. Thus, although the isomorphism $\bigwedge^L\mc H_1\cong\bigwedge^{n-L}\mc H_1$ is not $G$-equivariant, there is a 1-1 correspondence between the $G$-orbits. Also, since the isomorphism sends simple tensors to simple tensors, the rank and border rank behave in the same way.
\end{rem}

\section{A short overview of known results about secant varieties}\label{Sec:A_short_overview}

The literature on secant varieties is vast and diverse, and a real overview is a task we do not intend to pursue here. We only give a very small sample of theorems conserning the situations described in Section \ref{Sec Def Distinguishable}. Although the only result we will need is Theorem \ref{Theo rank of monom} we found it is useful to state some known results at least for the most interesting cases of quantum entanglement. The interested reader is referred to \cite{Landsberg-2012-book} for an extensive overview of the subject.

\subsubsection*{L-qubit system}

Assume $n_j=2$, i.e. $\mc H_j=\CC^2$, for all $j$. Thus ${\mc H_D}=(\CC^2)^{\otimes L}$ and the Segre variety $\XX$ is an embedding of the $L$-fold product $\PP^1\times\dots\times\PP^1$ into $\PP^{2^L-1}$. The dimensions of all secant varieties for this case have been determined in \cite{CGG-2011}. Below we state the main results from this paper.

\begin{theorem}\label{Theo SecVar Cubits} {\rm (\cite{CGG-2011})}
The dimension of the secant variety $\sigma_r(\XX)$ equals the expected dimension $rL+r-1$ in all cases but one. The exceptional case is $r=3,L=4$, i.e. four qubits and $\sigma_3$, where ${\rm dim}\sigma_3(\PP^1\times\PP^1\times\PP^1\times\PP^1) = 13$, while the expected dimension is $14$.
\end{theorem}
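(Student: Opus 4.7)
The plan is to apply Terracini's Lemma, which identifies the affine tangent space to $\sigma_r(\XX)$ at a generic point $\psi = p_1 + \cdots + p_r$ (with $p_i \in \hat\XX$ general) as the sum $T_{p_1}\hat\XX + \cdots + T_{p_r}\hat\XX$. For a simple tensor $p = v_1\otimes\cdots\otimes v_L$ in the Segre variety $\hat\XX \subset (\CC^2)^{\otimes L}$, the affine tangent space is
$$
T_p \hat\XX \;=\; \sum_{i=1}^L v_1\otimes\cdots\otimes v_{i-1}\otimes \mc H_i \otimes v_{i+1}\otimes\cdots\otimes v_L,
$$
which has dimension $L+1$ (the $L$ summands are two-dimensional and share the line $\CC\, p$). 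Thus the expected projective dimension of $\sigma_r(\XX)$ is $r(L+1)-1 = rL+r-1$, and the task reduces to deciding, for each $(L,r)$, whether the $r$ generic tangent spaces are linearly independent modulo the common line identifications.

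First I would clear the base cases by direct computation: for $L=2$ the variety $\XX = \PP^1\times\PP^1$ is a quadric surface in $\PP^3$, and the rank/secant structure reproduces the matrix case, so all $\sigma_r$ have expected dimension. For $L=3$ one checks that $\sigma_2(\XX) = \PP^7$, matching the well-known orbit picture (recalled in the motivating three-qubit section above). The main engine of the general argument would then be an inductive/specialization scheme on $L$: one degenerates a configuration of $r$ general points on $\XX = \PP^1\times\cdots\times\PP^1$ so that several of the points have a common last coordinate, thereby relating the dimension of the relevant tangent sum to the analogous dimension for $(L-1)$-fold Segre varieties, where the inductive hypothesis applies. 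A convenient incarnation is Terracini-type induction via projection from a point in the last $\PP^1$-factor, combined with residue arguments that track the contribution of the last factor to the span.

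For the exceptional pair $(L,r) = (4,3)$, I would argue the defect directly. The expected affine dimension is $15$, so non-defectivity would force $\sigma_3(\XX) = \PP^{15}$; instead, I would exhibit a nonzero linear form on $(\CC^2)^{\otimes 4}$ vanishing simultaneously on the three tangent spaces at three general points. Equivalently, one shows that the quartic hyperdeterminant (or an $SL_2^{\times 4}$-invariant polynomial of appropriate degree) cuts out a hypersurface containing $\sigma_3(\XX)$; a clean incarnation is to use the isomorphism $(\CC^2)^{\otimes 4}\cong \CC^2\otimes \CC^2\otimes \CC^4$, interpret a generic $\psi$ as a pencil of $2\times 4$ matrices, and observe that the condition of being a sum of three simple tensors imposes one nontrivial closed condition of rank-type on this pencil.

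The hardest step will be the inductive non-defectivity argument: the specialization must be engineered so that the single genuine defect at $(L,r)=(4,3)$ is isolated and does not pollute the induction. This is the core technical content of the Catalisano--Geramita--Gimigliano proof, and it is precisely the place where one must either invoke an Alexander--Hirschowitz-style residue lemma or perform a careful case analysis of small $L$ to anchor the induction.
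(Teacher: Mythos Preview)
The paper does not prove this theorem at all: it is quoted in Section~\ref{Sec:A_short_overview} as a known result of Catalisano--Geramita--Gimigliano \cite{CGG-2011}, with no argument supplied. So there is nothing in the paper to compare your proposal against. Your overall strategy (Terracini's lemma plus an inductive specialization on the number of $\PP^1$-factors) is indeed the method of \cite{CGG-2011}, and the description of the tangent space to the Segre cone is correct.

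However, your treatment of the defective case $(L,r)=(4,3)$ contains a genuine error. You write that ``the expected affine dimension is $15$, so non-defectivity would force $\sigma_3(\XX)=\PP^{15}$''. This is false: the ambient space is $\PP^{15}$ (affine $\CC^{16}$), and the expected \emph{projective} dimension of $\sigma_3$ is $\min(rL+r-1,\,2^L-1)=\min(14,15)=14$. Non-defectivity therefore means $\sigma_3$ is a hypersurface, not all of $\PP^{15}$. Consequently, ``exhibiting a nonzero linear form vanishing on the three tangent spaces at three general points'' proves nothing: since $3\cdot 5=15<16$, such a linear form always exists for trivial dimension reasons. For the same reason, producing a single invariant polynomial (a flattening determinant, say) that vanishes on $\sigma_3$ only shows $\dim\sigma_3\le 14$, which is the expected bound. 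To establish the defect you must show that the span of three generic affine tangent spaces has dimension at most $14$, i.e.\ exhibit \emph{two} independent linear forms annihilating it; equivalently, show that $\sigma_3$ lies in the common zero locus of two independent polynomials. (In fact, the paper itself records in the four-qubit discussion that $\sigma_3$ is the zero locus of two of the four generating $SL_2^{\times4}$-invariants, citing Brylinski \cite{Brylinski02}.) Your $\CC^2\otimes\CC^2\otimes\CC^4$ flattening idea can be made to work, but as stated it yields at most a codimension-one constraint, which is not enough.
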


\begin{coro}\label{Coro SecVar Cubits}
The minimal number of simple tensors in $(\CC^2)^{\otimes L}$ necessary to express a generic tensor as a linear combination is the expected number $r_g=r_{eg}= \left\lceil \frac{2^L}{L+1} \right\rceil$.
\end{coro}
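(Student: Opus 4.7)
The corollary is essentially an arithmetic consequence of Theorem \ref{Theo SecVar Cubits}, so my plan is to bootstrap the dimension computation to the typical rank. We already have the one-sided bound $r_g \leq r_{eg}$ for free from the definitions, so the task reduces to establishing $r_g \geq r_{eg}$, i.e., that no secant variety below index $r_{eg}$ fills the ambient $\PP^{2^L-1}$.

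The key observation is the following arithmetic fact: if $r < r_{eg} = \lceil 2^L/(L+1)\rceil$, then $r(L+1) < 2^L$, i.e.
$$
\mathrm{e}\dim\sigma_r(\XX) \,=\, rL+r-1 \,<\, 2^L-1 \,=\, \dim\PP.
$$
The plan is to use this inequality, together with Theorem \ref{Theo SecVar Cubits}, to conclude that $\sigma_r(\XX)$ is a proper subvariety of $\PP$ for every $r<r_{eg}$, and thus $r_g > r$.

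For the non-exceptional cases, i.e.\ all $(r,L)\neq(3,4)$, Theorem \ref{Theo SecVar Cubits} identifies $\dim\sigma_r(\XX)$ with its expected value $rL+r-1$; combined with the inequality above this gives $\dim\sigma_r(\XX)<\dim\PP$, hence $\sigma_r(\XX)\subsetneq\PP$. It remains to handle the one defective pair $(r,L)=(3,4)$. Here $r_{eg}=\lceil 16/5\rceil = 4$, so the only index $r<r_{eg}$ for which the theorem does not directly apply is $r=3$; but the theorem explicitly records $\dim\sigma_3(\XX) = 13 < 15 = \dim\PP^{15}$, so $\sigma_3(\XX)\neq\PP$ in this case as well. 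Hence in all cases $\sigma_r(\XX)\subsetneq\PP$ for $r<r_{eg}$, yielding $r_g\geq r_{eg}$, and combining with the reverse inequality, $r_g = r_{eg}$.

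There is essentially no serious obstacle beyond tracking the defective case; the main point is simply that the defect $14-13=1$ in the four-qubit, $r=3$ situation is too small to push $\sigma_3$ up to the full $\PP^{15}$, so the value of $r_g$ is unaffected. The formula $r_g = r_{eg} = \lceil 2^L/(L+1)\rceil$ then follows from the definition of $r_{eg}$ applied with $\dim\XX = L$ and $N = 2^L$.
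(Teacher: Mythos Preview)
Your argument has the two inequalities reversed. The bound $r_g \geq r_{eg}$ is the one that comes for free: since the parameter count always gives an \emph{upper} bound $\dim\sigma_r(\XX) \leq r\dim\XX + r - 1$, any $r < r_{eg}$ automatically satisfies $\dim\sigma_r(\XX) \leq r(L+1)-1 < 2^L-1$, so $\sigma_r(\XX)\subsetneq\PP$. You do not need Theorem~\ref{Theo SecVar Cubits} for this direction at all; your careful case analysis, including the defective pair $(3,4)$, is therefore superfluous here. (The line ``Clearly, $r_g\leq r_{eg}$'' in the paragraph on expected dimensions is a typo for $r_g\geq r_{eg}$, and this is presumably what misled you.)

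The direction that genuinely requires the theorem is $r_g \leq r_{eg}$, i.e.\ that $\sigma_{r_{eg}}(\XX)$ already fills $\PP^{2^L-1}$. For this you must know that $\sigma_{r_{eg}}$ is \emph{non-defective}, so that its dimension attains the expected value $\min\{r_{eg}(L+1)-1,\,2^L-1\} = 2^L-1$. Theorem~\ref{Theo SecVar Cubits} delivers exactly this, once you check that the unique defective pair $(r,L)=(3,4)$ never arises with $r = r_{eg}$: indeed for $L=4$ one has $r_{eg} = \lceil 16/5 \rceil = 4 \neq 3$. Hence $\dim\sigma_{r_{eg}}(\XX) = 2^L-1$, so $\sigma_{r_{eg}}(\XX) = \PP$ and $r_g \leq r_{eg}$. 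Combined with the automatic reverse inequality this gives $r_g = r_{eg}$. Your write-up thus contains the right ingredients but applies them to the wrong half of the equality; as it stands, the claim $r_g\leq r_{eg}$ is asserted without justification.
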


\subsubsection*{L-boson system}

The following theorem of Alexander and Hirschowitz determines which secant variety to the Veronese variety fills the ambient projective space.

\begin{theorem}{\rm (\cite{Alex-Hirsch})} A generic element of $S^L(\mc H_1)$ with $\mc H_1\cong\CC^n$, i.e. a generic state of $L$ $n$-level bosons, can be written as the sum of the expected number $r_{eg}=\lceil \frac{1}{n}\binom{n-1+L}{L} \rceil$, and no fewer, $L$-th powers of elements of $\mc H_1$, with the following exceptions: $(n,L)=(3,4),(4,4),(5,4),(5,3)$ and $(n,2)$ for all $n\geq2$.
\end{theorem}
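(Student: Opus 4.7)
My plan is to reformulate the statement as a dimension computation for secant varieties via Terracini's lemma, then reduce it to the classical interpolation problem for fat points, which is the form in which Alexander and Hirschowitz actually proved it.

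First I would recast the assertion. Let $\XX = \mathrm{Ver}_L(\PP(\mc H_1)) \subset \PP = \PP(S^L\mc H_1)$, set $N = \binom{n-1+L}{L}$, and note $\dim \XX = n-1$, so the expected typical rank is $r_{eg} = \lceil N/n \rceil$. By the properties of the typical rank established in Section~\ref{Sec Rank&Sec}, the assertion $r_g = r_{eg}$ is equivalent to $\sigma_r(\XX)$ having the expected dimension $\min\{rn-1,\,N-1\}$ for all $r$. Terracini's lemma identifies the affine tangent space of $\sigma_r(\XX)$ at a generic point $\psi = v_1^L + \cdots + v_r^L$ with $\sum_{i=1}^r T_i$, where $T_i$ is the affine tangent space to $\XX$ at $v_i^L$. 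Differentiating $\mathrm{Ver}_L$ directly gives $T_i = v_i^{L-1}\mc H_1 \subset S^L \mc H_1$, hence $\dim \sigma_r(\XX) + 1 = \dim \sum_{i=1}^r v_i^{L-1}\mc H_1$. Via the apolarity pairing between $S^L\mc H_1$ and $S^L\mc H_1^*$, the annihilator of $\sum v_i^{L-1}\mc H_1$ is exactly the space of degree-$L$ forms on $\PP^{n-1}$ vanishing to order $\geq 2$ at the generic points $[v_1],\dots,[v_r]$. Thus the theorem reduces to: for $Z = \{[v_1],\dots,[v_r]\}$ in general position and $2Z$ the associated scheme of double points,
\begin{equation*}
h^0\sco{\mc I_{2Z}(L)} = \max\bsco{0,\, N - rn} \,.
\end{equation*}

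The positive half of the theorem, i.e.\ non-defectivity outside the exceptional list, would be proved by the \emph{differential Horace method}. One specializes some of the $r$ fat points onto a hyperplane $H \cong \PP^{n-2}$ and uses the Castelnuovo exact sequence relating $H^0(\mc I_{2Z}(L))$ to the trace of the ideal on $H$ and to the residual scheme. A naive specialization is insufficient, because sending an entire double point onto $H$ removes too many conditions from the residual; Alexander and Hirschowitz introduced infinitesimal specializations that keep a chosen tangent direction transverse to $H$, after which a double induction on $n$ and $L$ propagates non-defectivity. The base cases are either trivial dimension counts or low-dimensional verifications.

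For the exceptional cases one exhibits unexpected elements of $H^0(\mc I_{2Z}(L))$. The case $L = 2$ is classical: $\sigma_r(\mathrm{Ver}_2(\PP^{n-1}))$ is the variety of symmetric $n\times n$ matrices of rank $\leq r$, whose dimension $rn - \binom{r}{2} - 1$ is strictly smaller than the expected $rn - 1$ for every $r \geq 2$. For $(n,L) = (3,4)$, through $r_{eg}=5$ generic points of $\PP^2$ passes a unique conic $C$, and $C^2$ is a quartic vanishing to order $2$ at all five points, producing an unexpected section. The remaining three cases $(4,4), (5,4), (5,3)$ admit analogous obstructions coming from the unique quadric through a special configuration of double points, or from rational normal curves. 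The hardest step of the whole argument is the inductive differential Horace step: one must choose, at each stage, a specialization such that both the restriction to $H$ and the residual fall into cases already handled by the induction hypothesis, and verifying that the exceptional list is exactly the four sporadic cases plus the quadric family is what makes the original proof long and technical.
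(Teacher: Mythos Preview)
The paper does not prove this theorem at all: it is stated as a known result and attributed to Alexander and Hirschowitz via the citation \cite{Alex-Hirsch}, with no argument given. Your proposal is therefore not competing with any proof in the paper; rather, you have sketched the actual strategy of the original Alexander--Hirschowitz paper (Terracini reduction to double-point interpolation, then the differential Horace induction), which is exactly the source the present paper defers to. As a sketch of that original argument your outline is accurate, including the identification of the exceptional cases via the squared conic for $(3,4)$ and the quadric obstructions for the remaining sporadic cases; the only caveat is that what you call ``the hardest step'' really is a lengthy technical verification that cannot be compressed into a paragraph, so your write-up should be understood as a roadmap rather than a self-contained proof.
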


\noindent The problem of determining the rank of a symmetric tenson with repsect to the Veronese vareity is known as Waring's problem. A result which plays a role in the next section concerns the rank of a monomial. Below, we state two theorems of Carlini, Catalisano and Geramita, solving Waring's problem for monomials and sums of coprime monomials. Let $e_1,...,e_n$ be a basis of $\mc H_1$. A monomial in $S^k(\mc H_1)$ is a product of powers of the basis elements, i.e. $M=e_1^{\alpha_1}...e_n^{\alpha_n}$ with $\alpha_1+\dots+\alpha_n=L$. The basis can be reordered so that the nonzero exponents are increasing and the zero exponents are at the end, i.e. $M=e_1^{\alpha_1}...e_m^{\alpha_m}$ with $1\leq \alpha_1\leq\dots\leq\alpha_m$ and $m\leq n$.

\begin{theorem}\label{Theo rank of monom}
{\rm (\cite{CarCatGer-2012-Waring})} Let $M\in S^L(\mc H_1)$ be a monomial as above. Then
$$
{\rm rk}(M) = \prod_{j=2}^{m} (\alpha_j+1) \;.
$$
\end{theorem}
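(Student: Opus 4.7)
The plan is to apply the Apolarity Lemma from the theory of Waring decompositions (Macaulay, Iarrobino--Kanev). Introduce dual coordinates $y_1,\ldots,y_n$ to the basis $e_1,\ldots,e_n$ and identify $S^\bullet\mc H_1^*$ with the polynomial ring $\CC[y_1,\ldots,y_n]$ acting on $S^\bullet\mc H_1=\CC[e_1,\ldots,e_n]$ by partial differentiation ($y_i$ acts as $\pl/\pl e_i$). A direct computation on monomials shows that the apolar ideal of $M$ is
$$
M^\perp \;=\; \bigl(\,y_{m+1},\ldots,y_n,\; y_1^{\alpha_1+1},\ldots,y_m^{\alpha_m+1}\,\bigr).
$$
By the Apolarity Lemma, ${\rm rk}(M)$ equals the minimal cardinality of a reduced finite set of points $Z\subset\PP(\mc H_1)$ whose homogeneous ideal $I(Z)$ is contained in $M^\perp$. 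Since the linear forms $y_{m+1},\ldots,y_n$ already lie in $M^\perp$, such a $Z$ is forced into the linear subspace $\Lambda\cong\PP^{m-1}$ spanned by $e_1,\ldots,e_m$, so one may assume $n=m$ throughout.

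For the upper bound I would exhibit an explicit Waring decomposition. For each $j=2,\ldots,m$ choose a primitive $(\alpha_j+1)$-th root of unity $\omega_j$, and to each multi-index $\vec{i}=(i_2,\ldots,i_m)\in\prod_{j=2}^m\{0,1,\ldots,\alpha_j\}$ associate the linear form
$$
v_{\vec{i}} \;=\; e_1+\omega_2^{i_2}e_2+\cdots+\omega_m^{i_m}e_m.
$$
Expanding $v_{\vec{i}}^{\,L}$ by the multinomial theorem, weighting by $\prod_{j=2}^m\omega_j^{-i_j\alpha_j}$, and summing over $\vec{i}$, the inner Fourier sums $\sum_{i_j=0}^{\alpha_j}\omega_j^{i_j(\beta_j-\alpha_j)}$ annihilate every exponent tuple $\beta$ with some $\beta_j\not\equiv\alpha_j\pmod{\alpha_j+1}$. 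The ordering hypothesis $\alpha_1\le\alpha_j$ is crucial here: within the simplex $\beta_1+\cdots+\beta_m=L$ with $\beta_k\ge 0$, no solution with some $\beta_j\ge 2\alpha_j+1$ is possible, since it would force $\beta_1<0$. Only $\beta=\alpha$ survives, the sum collapses to a nonzero scalar multiple of $M$, and one obtains ${\rm rk}(M)\le\prod_{j=2}^m(\alpha_j+1)$.

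The matching lower bound is where the main obstacle lies. I would proceed by induction on $m$, the case $m=1$ being trivial since $M=e_1^L$ has rank $1$. For the inductive step let $Z\subset\Lambda$ be a reduced apolar set of minimal cardinality. The first claim is that no point of $Z$ lies on the hyperplane $\{e_1=0\}$: otherwise intersecting $Z$ with that hyperplane yields an apolar scheme for the smaller monomial $M'=e_2^{\alpha_2}\cdots e_m^{\alpha_m}\in S^{L-\alpha_1}$ (this follows from the explicit generators of $M^\perp$ above), and the inductive hypothesis, combined with $\alpha_1\le\alpha_j$, contradicts minimality. Dehomogenizing at $e_1=1$, each $z\in Z$ acquires affine coordinates $(\zeta_2(z),\ldots,\zeta_m(z))\in\CC^{m-1}$, and the relations $y_j^{\alpha_j+1}\in M^\perp\supseteq I(Z)$ translate into polynomial identities on the $\zeta_j$. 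The bound $|Z|\ge\prod_{j=2}^m(\alpha_j+1)$ then follows from a Hilbert-function comparison between $S/I(Z)$ and the Artinian Gorenstein algebra $S/M^\perp$ (the latter has total length $\prod_{j=1}^m(\alpha_j+1)$ and socle in degree $L$). The genuinely delicate point is that an apolar scheme need not be the obvious product grid of roots of unity; extracting the sharp estimate purely from the presence of the power generators $y_j^{\alpha_j+1}$, and ruling out degenerate collapses under the affine projection above, is what demands the full force of the apolarity machinery and constitutes the technical crux of the argument.
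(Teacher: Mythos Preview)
The paper does not give a proof of this theorem; it is quoted from Carlini--Catalisano--Geramita \cite{CarCatGer-2012-Waring} and used only as a black box (in Section~\ref{Sec Example Bosons}, to obtain ${\rm rk}[v_1v_n^{L-1}]=L$). There is thus nothing in the paper to compare your proposal against; what you have sketched is an attempt at the original argument of \cite{CarCatGer-2012-Waring}.

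For the record: your computation of $M^\perp$ and your upper bound via the roots-of-unity grid are correct and are indeed the construction used in \cite{CarCatGer-2012-Waring}. Your lower bound, however, has genuine gaps, which you partly acknowledge. The assertion that a minimal apolar $Z$ has no point on $\{e_1=0\}$ is not established by the reasoning you give: even granting that $Z\cap\{e_1=0\}$ is apolar to $M'=e_2^{\alpha_2}\cdots e_m^{\alpha_m}$, a \emph{lower} bound on $|Z\cap\{e_1=0\}|$ from the inductive hypothesis says nothing against the minimality of $|Z|$ for $M$. The subsequent ``Hilbert-function comparison between $S/I(Z)$ and $S/M^\perp$'' is also too weak as stated, since it yields only $|Z|\ge\max_t HF_{S/M^\perp}(t)$, which is in general much smaller than $\prod_{j\ge2}(\alpha_j+1)$ (for $M=e_1e_2^{\,2}$ one gets $2$ instead of the required $3$). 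The sharp step in \cite{CarCatGer-2012-Waring} is rather to reduce modulo the variable $y_1$ dual to $e_1$ and compare the image $\bar I$ of $I(Z)$ in $\CC[y_2,\ldots,y_m]$ with the Artinian complete intersection $(y_2^{\alpha_2+1},\ldots,y_m^{\alpha_m+1})$, whose colength is exactly $\prod_{j\ge2}(\alpha_j+1)$; singling out $y_1$ is what uses $\alpha_1\le\alpha_j$, since this choice maximises the resulting colength.
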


\noindent Examples show that the rank of a monomial may be greater or smaller than the rank $r_g$ of a generic element of $S^L(\mc H_1)$.

\begin{theorem}{\rm (\cite{CarCatGer-2012-Waring})} Suppose $f\in S^L(\mc H_1)$ has the form $f=M_1+\dots + M_l$, where $M_1,...,M_l$ are pairwise coprime monomials of degree $L$. Then
$$
{\rm rk} (f) = \sum_{j=1}^{l} {\rm rk} (M_j) \;.
$$
Furthermore, if $M=M_1...M_l$ denotes the product, then
$$
{\rm rk}(M_j) \leq {\rm rk}(f) \leq {\rm rk}(M) \;.
$$
\end{theorem}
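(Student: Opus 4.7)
The plan is to prove the main equality ${\rm rk}(f)=\sum_j{\rm rk}(M_j)$ by establishing matching upper and lower bounds, then derive the two ``furthermore'' inequalities as consequences. The upper bound is immediate: concatenating an optimal Waring decomposition of each $M_j$ into ${\rm rk}(M_j)$ $L$-th powers gives a Waring decomposition of $f$ of length $\sum_j{\rm rk}(M_j)$.

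For the lower bound I would use the apolarity lemma, which identifies ${\rm rk}(\varphi)$, for $\varphi\in S^L\mc H_1$, with the minimum cardinality of a reduced zero-dimensional subscheme $Z\subset\PP(\mc H_1)$ whose homogeneous ideal is contained in the apolar (inverse-system) ideal ${\rm Ann}(\varphi)\subset S\mc H_1^*$. Pairwise coprimality of the $M_j$ yields a direct sum decomposition $\mc H_1=V_0\oplus V_1\oplus\cdots\oplus V_l$, with $M_j\in S^L V_j$ and $V_0$ the span of the basis vectors appearing in no $M_j$. Relative to this decomposition one checks that every positive-degree differential operator in the variables dual to $V_0$, and every monomial mixing distinct factors $V_j^*,V_k^*$, annihilates $f$; consequently ${\rm Ann}(f)\cap SV_j^*={\rm Ann}_{V_j}(M_j)$. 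The idea is then to take an optimal apolar scheme $Z$ for $f$ and show, using the mixed-vanishing relations, that $Z$ splits as a disjoint union $Z=Z_1\sqcup\cdots\sqcup Z_l$ with $Z_j\subset\PP(V_j)$ apolar to $M_j$, so that $|Z|=\sum_j|Z_j|\ge\sum_j{\rm rk}(M_j)$ by applying apolarity inside each $V_j$.

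The main obstacle is justifying this splitting. By induction on $l$ it suffices to treat $l=2$: write an optimal decomposition $f=\sum_{i=1}^r\ell_i^L$ and split each linear form as $\ell_i=\ell_i'+\ell_i''$ with $\ell_i'\in V_1$, $\ell_i''\in V_2$. Expanding by the binomial theorem and isolating the pure-$V_1$ and pure-$V_2$ components yields $\sum_i(\ell_i')^L=M_1$ and $\sum_i(\ell_i'')^L=M_2$, while vanishing of all genuinely mixed components produces the constraints $\sum_i(\ell_i')^k(\ell_i'')^{L-k}=0$ in $S^kV_1\otimes S^{L-k}V_2$ for each $0<k<L$. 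The delicate step is to deduce from these constraints that, after a length-preserving modification of the decomposition, every $\ell_i$ lies in a single $V_j$; this is where the explicit structure of minimal apolar schemes for monomials (Macaulay's inverse systems) is essential, and it forms the technical core of the argument.

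Finally, the ``furthermore'' inequalities follow quickly. The bound ${\rm rk}(M_j)\le{\rm rk}(f)$ is immediate from the main equality, or directly by noting that the linear projection $\mc H_1\to V_j$ sends $f$ to $M_j$ (since it kills each $M_k$ with $k\ne j$) and projections cannot increase rank. The bound ${\rm rk}(f)\le{\rm rk}(M)$ is a purely arithmetic check once one inserts Theorem \ref{Theo rank of monom} on both sides: ${\rm rk}(f)=\sum_j\prod_{i\ge2}(\alpha_{j,i}+1)$ and ${\rm rk}(M)=\prod_{i\ge2}(\beta_i+1)$, where $\{\beta_i\}$ is the sorted concatenation of all exponents of $M_1,\dots,M_l$; the inequality follows by comparing the sum-of-products against the single global product, with the extra factor arising from dropping the global minimum providing the needed slack.
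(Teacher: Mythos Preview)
The paper does not prove this theorem. It appears in Section~\ref{Sec:A_short_overview}, the literature overview, where it is simply quoted from \cite{CarCatGer-2012-Waring} as background; the only result from that section actually used later is Theorem~\ref{Theo rank of monom} on the rank of a single monomial. So there is no ``paper's own proof'' to compare against.

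As for your sketch on its own merits: the upper bound and the two ``furthermore'' inequalities are handled correctly. The lower bound, however, is not established. You correctly identify the apolarity framework and the key structural fact ${\rm Ann}(f)\cap SV_j^*={\rm Ann}_{V_j}(M_j)$, but the step you flag as ``delicate'' --- that an optimal apolar scheme (or an optimal Waring decomposition) for $f$ can be split along the $V_j$ without increasing its length --- is the entire content of the theorem, and you do not carry it out. The mixed-vanishing relations $\sum_i(\ell_i')^k(\ell_i'')^{L-k}=0$ by themselves do not force each $\ell_i$ to lie in a single $V_j$; there are easy examples (take $f=x^L+y^L$, which has rank $2$, yet one can write it with some $\ell_i$ genuinely mixed before simplifying). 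What is needed is an argument at the level of ideals: in \cite{CarCatGer-2012-Waring} one shows that any radical ideal of points contained in ${\rm Ann}(f)$ must, because of the monomial generators of ${\rm Ann}(f)$ coming from the coprimality, contain enough ``separating'' forms to force the points into the coordinate subspaces $\PP(V_j)$. Your outline gestures at this but stops short of the actual mechanism.
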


\subsubsection*{L-fermion system}

Results about the secant varieties of Grassmann varieties were obtained in \cite{CGG-2005}. In particular, the following theorem is proven therein.

\begin{theorem}
{\rm (\cite{CGG-2005})}
Let $\XX = Pl(Gr(L,\CC^n)) \subset \PP(\bigwedge^L\CC^n)$ be the image of the Pl\"ucker embedding of the Grassmann variety, with $L\leq n/2$ (see Section \ref{Sec Def Anticommuting} for details).

{\rm (i)} If $L=2$, then the typical rank is $r_g=\lfloor\frac{n}{2}\rfloor$. For $1<r<r_g$, the secant variety\\
 \indent \indent $\sigma_r(\XX)$ is defective with defect $e\dim\sigma_r(\XX)- \dim\sigma_r(\XX) = 2r(s-1)$.

{\rm (ii)} If $L\geq 3$ and $Lr\leq n$, then $\sigma_r(\XX)$ has the expected dimension $r\dim\XX+r-1$.
\end{theorem}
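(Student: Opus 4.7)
The two parts require quite different approaches. Part (i) reduces to the classical theory of antisymmetric matrices, while part (ii) is handled by Terracini's lemma together with a direct tangent-space computation at well-chosen decomposable tensors.

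For part (i), I would identify $\bigwedge^2\CC^n$ with the space of antisymmetric $n\times n$ matrices via $e_i\wedge e_j\mapsto e_ie_j^T-e_je_i^T$. A decomposable wedge $u\wedge v$ corresponds to the rank-$2$ matrix $uv^T-vu^T$, and more generally, an element of $\bigwedge^2\CC^n$ has tensor-rank $r$ with respect to the Pl\"ucker variety if and only if the associated matrix has matrix-rank $2r$. Since the maximal matrix-rank of an antisymmetric $n\times n$ matrix is $2\lfloor n/2\rfloor$, this immediately gives $r_g=\lfloor n/2\rfloor$. For the defect, use the standard normal form $M=BJB^T$, with $J$ the symplectic form on $\CC^{2r}$ and $B\in\CC^{n\times 2r}$ of full rank; the generic fiber of the parametrization $B\mapsto BJB^T$ is an $Sp(2r,\CC)$-orbit. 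This yields $\dim\sigma_r(\XX)=2rn-2r^2-r-1$, while the expected dimension is $r\dim\XX+r-1=2r(n-2)+r-1=2rn-3r-1$, so the defect equals $2r(r-1)$ (which is presumably what the statement's $2r(s-1)$ denotes).

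For part (ii), I would apply Terracini's lemma: at a generic $[v_1+\dots+v_r]\in\sigma_r(\XX)$ the projective tangent space equals $\PP(T_{v_1}\hat{\XX}+\dots+T_{v_r}\hat{\XX})$, so non-defectiveness is equivalent to showing this sum is direct of dimension $r(L(n-L)+1)$. Using $Lr\leq n$, fix a basis $e_1,\dots,e_n$ of $\CC^n$, set $B_i=\{(i-1)L+1,\dots,iL\}$, and let $v_i=e_{(i-1)L+1}\wedge\dots\wedge e_{iL}$. Then $T_{v_i}\hat{\XX}$ is spanned by $v_i$ together with all basis tensors $e_{a_1}\wedge\dots\wedge e_{a_L}$ of $\bigwedge^L\CC^n$ whose index set contains at least $L-1$ elements of $B_i$ (obtained by replacing a single factor of $v_i$ by some $e_k$). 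A basis tensor lying in both $T_{v_i}\hat{\XX}$ and $T_{v_{i'}}\hat{\XX}$ for $i\neq i'$ would need $\geq L-1$ indices in each of the disjoint blocks $B_i,B_{i'}$, hence $\geq 2L-2$ indices in total, contradicting the fact that it has only $L$ indices as soon as $L\geq 3$. Consequently the spaces $T_{v_i}\hat{\XX}$ are spanned by pairwise disjoint sets of standard basis vectors of $\bigwedge^L\CC^n$, and are in direct sum of the expected total dimension.

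The real work is concentrated in part (ii); the main obstacle is exhibiting a configuration of points of $\XX$ whose affine tangent spaces can be shown to meet only in $0$. The basis-block choice above converts this into a trivial incidence count, which breaks down precisely when $L=2$ — consistent with, and in fact explaining, the systematic defectiveness recorded in part (i).
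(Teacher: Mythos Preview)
The paper does not supply its own proof of this theorem: it is stated in Section~\ref{Sec:A_short_overview} as a result quoted verbatim from \cite{CGG-2005}, with no argument given. So there is nothing in the present paper to compare your proposal against.

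That said, your sketch is sound. For part~(i) the identification with antisymmetric matrices and the $Sp(2r,\CC)$-fiber count give the correct projective dimension $2rn-2r^2-r-1$, and your defect $2r(r-1)$ is right (the ``$s$'' in the statement is evidently a typo for $r$). For part~(ii) your use of Terracini's lemma together with the block-disjoint decomposables $v_i=e_{(i-1)L+1}\wedge\cdots\wedge e_{iL}$ is exactly the standard technique; the key incidence count---that a basis $L$-wedge lying in two distinct $T_{v_i}\hat\XX$ would need at least $2L-2>L$ indices when $L\ge3$---is clean and correct. One point worth making explicit: you are verifying the expected dimension at a \emph{special} $r$-tuple, whereas Terracini concerns generic tuples; but since the expected dimension is an upper bound and the span dimension is lower semicontinuous in the points, achieving it once forces it generically. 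With that remark your argument is complete, and indeed mirrors the method of the original reference \cite{CGG-2005}.
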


\section{The main results}\label{Sec:the_main_results}

Let $G$ be a reductive complex Lie group, i.e. $G$ equals the complexification $K^\CC$ of a compact Lie group $K$. Recall that a Borel subgroup $B\subset G$ is a maximal solvable subgroup and all Borel subgroups of $G$ are conjugate. In the case when $G$ is $SL_N(\CC)$, an example of a Borel subgroup is given by all upper-triangular matrices in $G$. In case $G$ splits as a direct product $G= G_1\times\dots\times G_L$, any Borel subgroup of $G$ is obtained as a product $B=B_1\times\dots\times B_k$, where $B_j$ is a Borel subgroup of $G_j$.

Suppose that $G$ acts on an algebraic (affine or projective) variety $\YY$. We say that $\YY$ is a $G$-variety. The variety $\YY$ is called a spherical $G$-variety, if a Borel subgroup $B$ of $G$ has an open dense orbit in $\YY$.

Let $\YY$ be a spherical $G$-variety. Since some, and therefor any, Borel subgroup $B$ of $G$ has an open dense orbit in $\YY$, it follows that $G$ also has an open dense orbit in $\YY$. Let $y\in \YY$ be a point from this open dense $G$-orbit and $G_y$ be the isotropy group at this point. Then the open $G$-orbit in $\YY$ is isomorphic to the coset space $G/G_y$. The open $B$-orbit is necessarily contained in the open $G$-orbit. Thus $B$ has an open orbit in $G/G_y$.

We are mostly concerned with a very specific type of spherical varieties described as follows. Let $\rho:G\to GL({\mc H})$ be a linear representation of a reductive group $G$. If the resulting action of $G$ on $\mc H$ is spherical, we say that $\rho$ is a {\it spherical representation}. In such a case, it is easy to see that the induced action of $G$ on the projective space $\PP=\PP(\mc H)$ is also spherical. The converse is not automatically true, but the problem can easily be removed as follows. Suppose the $G$ acts spherically on $\PP(\mc H)$ via a linear representation $\rho$ as above. Let $\tilde{G}=\CC^\times\times G$ and consider the representation
$$
\tilde{\rho}:\tilde{G} \lw GL(\mc H) \,,\quad \rho(\lambda,g)=\lambda\rho(g) \;.
$$
This representation defines a spherical action of $\tilde{G}$ on ${\mc H}$.

Suppose now $\rho:G\lw GL(\mc H)$ is a spherical representation which is in addition irreducible. Then $G$ has a unique closed orbit $\XX\subset\PP$. We may consider the secant varieties $\sigma_r(\XX)$ of $\XX$ in $\PP$. The irreducibility implies that $\XX$ spans $\PP$. Thus we have well defined rank and border rank functions on $\PP$ with respect to $\XX$. The main theorem of the paper, which we prove in the next section, is the following

\begin{theorem}\label{Theo Spher<->no except}
Suppose that we have one of the following three configurations\footnote{See Section \ref{Sec Cases of interest} for details.} of a state space ${\mc H}$, a complex reductive Lie group $G$ acting irreducibly on ${\mc H}$, and a variety of coherent states $\XX\subset\PP(\mc H)$, which is the unique closed $G$-orbit in the projective space $\PP(\mc H)$.

{\rm (i)} ${\mc H_D}=\mc H_1\otimes ...\otimes \mc H_L$, $G_D=GL(\mc H_1)\times ...\times GL(\mc H_L)$, $\XX=\mathrm{Segre}(\PP(\mc H_1)\times...\times\PP(\mc H_L))$.

{\rm (ii)} ${\mc H_B}=S^L(\mc H_1)$, $G=GL(\mc H_1)$, $\XX=\rm{Ver}_L(\PP(\mc H_1))$.

{\rm (iii)} ${\mc H_F}=\bigwedge^L\mc H_1$, $G=GL(\mc H_1)$, $\XX=\rm{Pl}(Gr(L,\mc H_1))$.

\noindent Then the action of $G$ on ${\PP(\mc H_{B,F}})$ (resp. $G_D$ on $\PP(\mc H_{D}$)) is spherical if and only if there do not exist exceptional states in $\PP(\mc H_{B,F})$ (resp. $\PP(\mc H_{D})$) with respect to $\XX$. In other words, sphericity of the representation is equivalent to the property that states of any given rank cannot be approximated by states of lower rank.
\end{theorem}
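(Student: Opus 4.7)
The plan is to prove the two implications of the equivalence separately. Each of the three families is very rigid when spherical and collapses to a classical matrix-rank situation; in the remaining non-spherical cases, $W$-type states provide explicit witnesses of exceptionality.

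\textbf{Sphericity implies no exceptional states.} First I would classify which of the configurations (i)--(iii) are spherical. Modulo a short list of low-dimensional fermionic exceptions (such as $\bigwedge^3\CC^6$) that must be checked by inspection, sphericity forces the degree parameter to be two: (i) $\mc H_1\otimes\mc H_2$, (ii) $S^2\mc H_1$, (iii) $\bigwedge^2\mc H_1$. In each of these cases $\mc H$ is identified with a space of rectangular, symmetric, or skew-symmetric matrices, and the rank with respect to $\XX$ coincides with the classical matrix (respectively Pfaffian) rank. Each $\sigma_r(\XX)$ is then cut out by $(r+1)\times(r+1)$ minors or appropriate Pfaffians, so $\sigma_r(\XX)=\XX_1\sqcup\cdots\sqcup\XX_r$ as a set, which yields ${\rm rk}[\psi]=\ul{\rm rk}[\psi]$ for every $[\psi]\in\PP$.

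\textbf{Non-sphericity implies existence of exceptional states.} Here one has $L\geq 3$ with enough room in the one-particle spaces, and I would exhibit a $W$-type state of border rank at most two but of rank at least three. The upper bound on border rank is a uniform derivative-of-power trick: in case (ii), $\frac{1}{t}[(e_1+te_2)^L-e_1^L]\to L\,e_1^{L-1}e_2$ exhibits $e_1^{L-1}e_2$ as an element of $\sigma_2(\mathrm{Ver}_L(\PP(\mc H_1)))$; in case (i), the analogous quotient $\frac{1}{t}[(e_1+te_2)^{\otimes L}-e_1^{\otimes L}]$ converges to the Segre $W$-state $W_D=\sum_{i=1}^L e_1\otimes\cdots\otimes e_2\otimes\cdots\otimes e_1$, placing it in $\sigma_2$ of the Segre variety; in case (iii), a one-parameter deformation of a single decomposable $L$-vector, $\frac{1}{t}[(e_1+te_{L+1})\wedge\cdots\wedge(e_L+te_{2L})-e_1\wedge\cdots\wedge e_L]$, converges to a wedge-$W$-type form lying in $\sigma_2$ of the Grassmannian. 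For the lower rank bound, (ii) is handed to us by Theorem \ref{Theo rank of monom}, which gives ${\rm rk}(e_1^{L-1}e_2)=L\geq 3$; (i) follows from a classical flattening of $W_D$ along the $(1,L-1)$ partition of the factors, whose image is a matrix of classical rank $L$; (iii) is reduced to the distinguishable case via the natural inclusion $\bigwedge^L\mc H_1\hookrightarrow\mc H_1^{\otimes L}$, which does not increase rank with respect to the respective coherent-state varieties.

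\textbf{Main obstacle.} The border-rank upper bound is a routine one-parameter limit, so the technical heart of the non-sphericity direction is the lower bound on true rank. For bosons this is Theorem \ref{Theo rank of monom}; for distinguishable particles the flattening argument is standard. The fermionic case is the most delicate: a naive antisymmetrization of the tensor $W$-state vanishes (since $W_D$ is permutation-symmetric), so the wedge-$W$-type candidate must be built from decomposable $L$-vectors that differ in only one factor, as above, and the lower rank bound must be transferred carefully through the inclusion $\bigwedge^L\mc H_1\hookrightarrow\mc H_1^{\otimes L}$. A secondary obstacle is cataloguing the low-dimensional spherical exceptions in the fermionic family so that the non-sphericity construction is invoked only where it applies. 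Once the three rank lower bounds are in place, combining them with the border rank bounds produces an exceptional state in every non-spherical configuration and completes the equivalence.
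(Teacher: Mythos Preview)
Your outline for the forward implication is essentially the paper's argument: sphericity forces $L\leq 2$ (with the convention $L\leq n/2$ for fermions, so no extra ``exceptions'' arise --- in particular $\bigwedge^3\CC^6$ is \emph{not} spherical; it is the basic non-spherical fermionic case), and for $L=2$ the rank is a matrix rank, hence lower-semicontinuous. Your border-rank upper bounds via the derivative trick are also correct and equivalent to the paper's one-parameter-subgroup limits.

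The genuine gaps are in the rank \emph{lower} bounds for cases (i) and (iii).

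\textbf{Case (i).} The $(1,L-1)$ flattening of $W_D=\sum_i e_1\otimes\cdots\otimes e_2\otimes\cdots\otimes e_1$ has matrix rank $2$, not $L$: row $e_1$ is the $(L{-}1)$-particle $W$-state and row $e_2$ is $e_1^{\otimes(L-1)}$, two linearly independent vectors and no more. In fact every flattening of $W_D$ has rank $2$; its multilinear rank is $(2,\ldots,2)$, which is precisely why $W_D$ is the textbook example of a gap between rank and border rank. Flattenings cannot detect ${\rm rk}[W_D]>2$. The paper proceeds differently: a reduction lemma shows that rank is unchanged when one restricts to a sub-product $U_1\otimes\cdots\otimes U_L$, so it suffices to treat $\CC^2\otimes\CC^2\otimes\CC^2$; there ${\rm rk}[W]=3$ is established by an orbit-theoretic argument, using that $W$ and $GHZ$ are both critical points of $\|\mu\|^2$ with distinct reduced spectra and hence lie on distinct $G$-orbits.

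\textbf{Case (iii).} The inclusion $\bigwedge^L\mc H_1\hookrightarrow\mc H_1^{\otimes L}$ does not relate the two ranks in a usable way. A decomposable $v_1\wedge\cdots\wedge v_L$ has Segre rank strictly greater than $1$, so Grassmannian-rank-$1$ elements are not Segre-rank-$1$; consequently neither ``rank does not increase'' inequality holds in the direction you need. Moreover, your wedge-$W$ state, once expanded in $\mc H_1^{\otimes L}$, is not the tensor $W_D$ (the latter is symmetric and antisymmetrizes to zero), so case (i) cannot be invoked even if it were settled. The paper again reduces, via two applications of the reduction lemma, to $\bigwedge^3\CC^6$ and carries out an explicit orbit analysis: the rank-$2$ locus splits into two $SL_6$-orbits $\XX_2'\sqcup\XX_2''$, and the wedge-$W$ state is shown to lie in neither by a span argument for $\XX_2''$ and a momentum-map critical-point argument for the open orbit $\XX_2'$.

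Case (ii) is fine and matches the paper. But without correct rank lower bounds for (i) and (iii), the non-sphericity direction is incomplete.
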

\noindent Note that theorem \ref{Theo Spher<->no except} provides a new definition of sphericity for the considered systems. The notion of rather non-physical dense Borel group orbit has been changed to a state-type obstruction. In the next section, we will see that some exceptional states belong to non-bipartite SLOCC entanglement classes. In particular, $W$-type state are exceptional. Hence the obstruction to sphericity is not only state-type but also entanglement-type.

Recently the importance of spherical varieties for the problem of local unitary equivalence of states has been pointed out by the authors of \cite{SawHuckKus2012}. Two states of $L$ distinguishable particles, $L$ bosons or $L$ fermions are called local unitary equivalent if and only if they can be connected by the action of $K_D=SU(N_1)\times\ldots \times SU(N_L)$, $K=SU(N)$ or $K=SU(N)$, respectively. It is well known that complex projective spaces are K\"{a}hler and hence symplectic manifolds. The action of $K_D$ on $\mathbb{P}(\mathcal{H}_D)$ and action of $K$ on $\mathbb{P}(\mathcal{H}_{B,F})$ are hamiltonian and therefore there exist the momentum maps $\mu:\mathbb{P}(\mathcal{H}_D)\rightarrow \mathfrak{k}^\ast_D$ and $\mu:\mathbb{P}(\mathcal{H}_{B,F})\rightarrow \mathfrak{k}^\ast$. Using the customary identification of Lie algebra with its dual one can show that \cite{SawHuckKus-2011}

\begin{eqnarray}
\mu([v])=\frac{i}{2}\{\widetilde{\rho}_1([v]),\ldots,\widetilde{\rho}_L([v])\}\,\,\, \mathrm{for\,distinguishable\,particles},\\
\mu([v])=\frac{i}{2}\widetilde{\rho}_1([v])\,\,\, \mathrm{for\,bosons\,and\,fermions},
\end{eqnarray}
where in all three cases $\widetilde{\rho}_i([v])=\rho-\frac{1}{n_i}I_{n_i}$ and $\rho_i([v])$ are the one-particle reduced density matrices of a state $[v]$. The natural question stemming from the local unitary (LU) equivalence problem is: when could one decide about LU equivalence by means of reduced density matrices only? The answer to this question is provided by Brion's theorem, \cite{brion87}, stating that an irreducible $K^\CC$-variety is spherical if and only if the momentum map distinguishes $K$-orbits. The formulation of this theorem adjusted to our setting reads:

\begin{theorem}\label{brion result}Let $K$ be a connected
compact Lie group acting on $\PP (\mathcal{H})$ by a Hamiltonian action and let $G=K^{\mathbb{C}}$.
The following are equivalent
\begin{enumerate}
\item $\PP (\mathcal{H})$ is a spherical variety.
\item Two states $[v_1]$ and $[v_2]$ are local unitary equivalent if and only if the spectra of their reduced one-particle density coincide.
\end{enumerate}
\end{theorem}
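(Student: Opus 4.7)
The plan is to derive Theorem \ref{brion result} as a direct specialization of Brion's theorem \cite{brion87} recalled just above, which asserts that an irreducible Hamiltonian $K$-variety $\PP(\mathcal{H})$ is spherical if and only if the momentum map $\mu$ separates $K$-orbits, i.e.\ $K\cdot[v_1]=K\cdot[v_2]$ iff $K\cdot\mu([v_1])=K\cdot\mu([v_2])$. With that result available, all that remains is to translate the phrase ``$\mu$ separates $K$-orbits'' into the language of spectra of one-particle reduced density matrices.

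First I would unpack the explicit momentum map formulas displayed above the theorem. For bosons and fermions $\mu([v])=\tfrac{i}{2}\widetilde{\rho}_1([v])$ is a single traceless hermitian element of $\mathfrak{k}=\mathfrak{su}(N)$, while for $L$ distinguishable particles $\mu([v])=\tfrac{i}{2}(\widetilde{\rho}_1([v]),\ldots,\widetilde{\rho}_L([v]))$ lies in $\mathfrak{k}_D=\mathfrak{su}(N_1)\oplus\cdots\oplus\mathfrak{su}(N_L)$, where one uses the identification $\mathfrak{k}^*\cong\mathfrak{k}$ provided by the invariant trace form. By the spectral theorem, two elements of $\mathfrak{su}(N)$ are $SU(N)$-conjugate if and only if they share the same spectrum, and since the adjoint action of $K_D$ on $\mathfrak{k}_D$ is componentwise, the criterion applies factor by factor. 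The constant shift $\rho_i\mapsto\widetilde{\rho}_i=\rho_i-\tfrac{1}{n_i}I_{n_i}$ is a uniform translation of eigenvalues, so it does not affect the comparison. Consequently $\mu([v_1])$ and $\mu([v_2])$ lie in the same $K$-coadjoint orbit if and only if the spectra of $\rho_i([v_1])$ and $\rho_i([v_2])$ coincide for every $i$.

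With this dictionary in hand, the equivalence reads off immediately. LU-equivalence of $[v_1]$ and $[v_2]$ is by definition the condition $K\cdot[v_1]=K\cdot[v_2]$. The ``only if'' direction of condition (2) is automatic, since $K$-equivalent states have unitarily conjugate reduced density matrices and therefore identical spectra. The substantive content of (2) is the converse---equality of spectra implies $K$-equivalence---which, by the preceding step, is precisely the statement that $\mu$ separates $K$-orbits on $\PP(\mathcal{H})$. Brion's theorem then identifies this with sphericity of $\PP(\mathcal{H})$, giving the equivalence with (1).

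I expect the main obstacle to lie not in the physical translation but in quoting Brion's theorem in the precise projective, Hamiltonian form needed here (rather than its original affine formulation), together with a careful check that the identification $\mathfrak{k}^*\cong\mathfrak{k}$ and the traceless shift descend cleanly to the quotient $\mathfrak{k}^*/K$ parametrising coadjoint orbits by spectra. Beyond these bookkeeping points the argument is essentially a translation, since the momentum map formulas for all three configurations of interest have already been recorded and reduce the coadjoint-orbit comparison to the spectral data of the one-particle reduced density matrices.
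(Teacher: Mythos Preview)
Your proposal is correct and matches the paper's own treatment: the paper does not give a separate proof but presents the theorem as a direct reformulation of Brion's result \cite{brion87}, with the translation supplied by the momentum-map formulas recorded just above, which identify coadjoint $K$-orbits with spectra of the reduced one-particle density matrices. Your write-up is in fact more explicit than the paper about the spectral-theorem step and the componentwise argument for $K_D$, but the underlying approach is identical.
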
	

\noindent Combining theorem \ref{Theo Spher<->no except} with the above theorem we get
\begin{theorem}\label{lu-result}
The existence of the exceptional states is a physical, state-type obstruction for deciding the local unitary equivalence by means of the one-particle reduced density matrices.
\end{theorem}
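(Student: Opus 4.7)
The statement is essentially a formal consequence of the two theorems invoked in the paragraph immediately preceding it, and my proof plan is to carry out the composition carefully, making sure the qualitative phrase ``state-type obstruction'' is rendered as a precise implication between mathematical statements.

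First I would recast Theorem \ref{lu-result} in a precise contrapositive form: if $\PP(\mc H)$ (in any of the three configurations of Theorem \ref{Theo Spher<->no except}) contains an exceptional state, then there exist two states $[v_1],[v_2]\in\PP(\mc H)$ which are \emph{not} local unitary equivalent but whose one-particle reduced density matrices have coinciding spectra. In this form the claim is manifestly a statement about the failure of a certain decision procedure, and the word ``obstruction'' acquires a definite meaning.

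Next I would chain the two previously established equivalences. By Theorem \ref{Theo Spher<->no except}, the existence of exceptional states in $\PP(\mc H)$ is equivalent to the non-sphericity of the $G$-action on $\PP(\mc H)$, where $G=K^\CC$ for the relevant compact group $K\in\{K_D,K\}$. Then, by Brion's theorem in the form of Theorem \ref{brion result}, non-sphericity of this action is equivalent to the statement that the momentum map $\mu$ does not distinguish $K$-orbits, i.e.\ that there exist two states in different $K$-orbits whose images under $\mu$ lie in the same $K$-orbit in $\mk k^\ast$. Using the explicit form of $\mu$ recalled just before Theorem \ref{brion result}, namely that $\mu([v])$ is (up to a constant shift) the collection of one-particle reduced density matrices $\rho_i([v])$, and the fact that two density matrices lie in the same $K_i$-coadjoint orbit if and only if they have the same spectrum, this translates directly into the failure of spectral distinguishability of reduced one-particle density matrices. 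Combining these implications yields the contrapositive form stated in the first paragraph.

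The only genuinely non-formal step, and the one I would treat most carefully, is the translation between ``the momentum map distinguishes $K$-orbits'' (as in Brion's theorem) and ``the spectra of reduced one-particle density matrices coincide'' (as phrased in Theorem \ref{brion result}). I would spell out that $\mu([v])\in\mk k^\ast\cong \mk k$ lies in a single $K$-orbit precisely when the tuple $\{\wt\rho_1([v]),\ldots,\wt\rho_L([v])\}$ of traceless shifted density matrices can be simultaneously diagonalized to a fixed canonical form by the adjoint action of $K_D$ (respectively $K$ in the bosonic/fermionic cases), which is equivalent to the coincidence of the spectra $\mathrm{spec}(\rho_i([v_1]))=\mathrm{spec}(\rho_i([v_2]))$ for every $i$. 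I do not anticipate a serious obstacle here; rather, this is the place where one verifies that Theorem \ref{brion result} has been correctly specialized to the physical setting. After this verification, the proof is complete in one line: exceptional state $\Rightarrow$ non-spherical (Theorem \ref{Theo Spher<->no except}) $\Rightarrow$ LU equivalence cannot be read off from the spectra of one-particle reduced density matrices (Theorem \ref{brion result}).
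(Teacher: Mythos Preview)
Your proposal is correct and follows exactly the route the paper takes: the theorem is stated in the paper as an immediate consequence of combining Theorem~\ref{Theo Spher<->no except} with Theorem~\ref{brion result}, and you have simply spelled out that composition (and the contrapositive) in more detail than the paper does. There is nothing to add.
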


\begin{rem}
As we mentioned in section \ref{Sec Motiv 3 qubits}, for three qubits the only exceptional states are $W$-type states which have rank three and border rank two. The states of rank two which can approximate the state $W$ belong to $\XX_2^{\rm I}$, which is the SLOCC class of $GHZ$ state. Note that if we remove this class from $\mathbb{P}(\mathcal{H})$ the remaining states, i.e. the closure of the SLOCC class of $W$, are non-exceptional. On the other hand, in \cite{SWK13} it was shown that the closure of the SLOCC class of $W$ is spherical, i.e. it is possible to decide the local unitary equivalence by means of the one-particle reduced density matrices for $W$-type states. In the light of theorem \ref{lu-result} one can argue that it is because we removed from $\mathbb{P}(\mathcal{H})$ the states which were responsible for appearance of exceptional states. The generalization of this kind of reasoning to other systems is not immediate, however, it might give an insight to the classification of exceptional states.
\end{rem}

\section{The proof}\label{Sec Spher->no except}

In this section we present a proof of the main results stated in section \ref{Sec:the_main_results}. The proof is organized as follows. First, in theorem \ref{theol->}, we show that sphericity implies lack of exceptional states. Next, we proceed with a proof of the opposite implication for $L$ distinguishable particles, $L$ bosons and $L$ fermions. It essentially consists of two parts. The first one reduces the problem to some particular low dimensional cases (see section \ref{Sec Reduction}). Then we prove the result for these cases, which finishes the proof of Theorem \ref{Theo Spher<->no except}.

\begin{theorem}\label{theol->} Let $G\to GL(\mc H)$ be an irreducible representation of a reductive complex Lei group $G$, such that the action of $G$ on $\PP(\mc H)$ is spherical. Let $\XX\subset \PP(\mc H)$ be the closed $G$-orbit. Then rank and border rank on $\PP(\mc H)$ with respect to $\XX$ coincide, i.e.
$$
{\rm rk}_\XX[\psi] = \ul{\rm rk}_\XX[\psi] \;,
$$
for all $[\psi]\in\PP$. In other words, there are no exceptional states in $\PP$.
\end{theorem}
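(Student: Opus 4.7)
My plan is to exploit the rigidity that sphericity forces on the orbit structure of $\PP(\mc H)$ and then match it against the stratification by rank. Sphericity means a Borel subgroup $B\subset G$ has an open dense orbit in $\PP(\mc H)$; by Brion's finiteness theorem this implies $B$, and therefore $G$, has only finitely many orbits in $\PP(\mc H)$. Since $\mathrm{rk}_\XX$ is $G$-invariant, each rank subset $\XX_r$ is a finite union of $G$-orbits, and each secant variety $\sigma_r(\XX)$, being a closed $G$-invariant subvariety, is a finite union of $G$-orbit closures.

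Using the proposition of Section~\ref{Sec Rank&Sec} which asserts that $\XX_r$ is Zariski open and dense in $\sigma_r(\XX)$, the open $G$-orbit of $\sigma_r(\XX)$ lies in $\XX_r$ and therefore has rank exactly $r$. This reduces the claim ``rank equals border rank'' to showing $\sigma_r(\XX)=\XX_1\sqcup\dots\sqcup\XX_r$ for every $r$, i.e.\ that every $G$-orbit contained in $\sigma_r(\XX)$ has rank at most $r$. An exceptional state corresponds to a $G$-orbit $\mc O\subseteq\sigma_r(\XX)\setminus\sigma_{r-1}(\XX)$ with $\mathrm{rk}(\mc O)>r$, and the goal is to rule out any such orbit.

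To achieve this I would combine two classification results from the literature. First, the classification of irreducible spherical representations of reductive groups \cite{Knop-MultFree-1998} forces the closed $G$-orbit $\XX\subset\PP(\mc H)$ to be a \emph{subcominuscule} flag variety (i.e.\ $G/P$ with $P$ a parabolic whose unipotent radical is abelian); in our setting this identifies $\XX$ as one of the three $L=2$ coherent state varieties of Section~\ref{Sec Cases of interest}, namely the Segre $\PP(\mc H_1)\times\PP(\mc H_2)$, the second Veronese $\mathrm{Ver}_2(\PP(\mc H_1))$, or the Grassmannian $Gr(2,\mc H_1)$. Second, the theorem of Buczy\'nski and Landsberg \cite{Bucz-Lands-2011} establishes that rank and border rank coincide precisely for subcominuscule varieties. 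Concatenating these two results with the reduction above yields Theorem~\ref{theol->}.

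The genuinely hard input is the second one. Sphericity alone delivers only finiteness of the orbit stratification, whereas the Buczy\'nski--Landsberg argument exploits the Jordan-type structure on $\XX$ furnished by the abelian unipotent radical of the subcominuscule parabolic $P$: the affine cone $\hat\XX$ carries a Levi-equivariant stratification whose strata coincide with the rank strata, and every $[\psi]\in\sigma_r(\XX)$ can be moved by $G$ into a standard affine slice where an explicit rank-$r$ decomposition becomes available. I would expect reproducing this Jordan-type slicing without appealing to the cominuscule classification to be the main obstacle to any fully self-contained proof.
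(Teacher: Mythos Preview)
Your proposal is correct and ultimately rests on the same backbone as the paper's proof: invoke Knop's classification of irreducible spherical representations, then verify the claim for the resulting list. The difference is in the verification step. The paper observes that, for the three families at hand, sphericity forces $L=2$, whence any state is (the projectivization of) an arbitrary, symmetric, or antisymmetric matrix; $\mathrm{rk}_\XX$ is the matrix rank (or half of it, in the fermionic case), and the determinantal locus $\{\mathrm{rk}\,M\le r\}$ is Zariski closed, so rank and border rank agree by an elementary argument. You instead identify the closed orbit as subcominuscule and invoke the Buczy\'nski--Landsberg theorem \cite{Bucz-Lands-2011}. The paper's introduction in fact mentions exactly this alternative route. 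Your approach buys generality (it covers every spherical representation in Knop's list uniformly), at the price of outsourcing the hard work to a deeper external result; the paper's version is self-contained and completely elementary for the three cases it cares about. One minor remark: your first two paragraphs on Brion's finiteness and the orbit stratification are correct but do not feed into your actual argument, since you end up appealing to classification and \cite{Bucz-Lands-2011} anyway; they could be excised without loss.
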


\begin{proof} All spherical representations of reductive groups are classified; see Knop \cite{Knop-MultFree-1998}. The theorem can be proved case by case using Knop's list. In particular, for the systems of our interest the sphericity is present only when $L=2$ (see \cite{SawHuckKus2012} for a detailed discussion). So we need to consider arbitrary $2$-tensors for distinguishable particles and symmetric or antisymmetric $2$-tensors for bosons an fermions. Note, however, that any such tensor $\psi$ can be represented by a matrix $M$, which is arbitrary, symmetric or antisymmetric, respectively. We have ${\rm rk}_\XX[\phi]={\rm rk}M$ in the cases of distinguishable particles and symmetric tensors, and ${\rm rk}_\XX[\psi]=\frac{1}{2}{\rm rk}M$ in the case of fermions. The set of matrices of rank smaller or equal to a given $r$ is a closed set. Hence the rank and the border rank coincides.
\end{proof}

\subsection{Absence of exceptional states implies sphericity}

In this section we prove the following result.

\begin{theorem} Suppose that we have one of the following three configurations\footnote{See Section \ref{Sec Cases of interest} for details.} of a state space ${\mc H}$, a complex reductive Lie group $G$ acting irreducibly on ${\mc H}$, and a variety of coherent states $\XX\subset\PP(\mc H)$, which is the unique closed $G$-orbit in the projective space $\PP(\mc H)$.

{\rm (i)} ${\mc H}=\mc H_1\otimes ...\otimes \mc H_L$, $G=GL(\mc H_1)\times ...\times GL(\mc H_L)$, $\XX=\mathrm{Segre}(\PP(\mc H_1)\times...\times\PP(\mc H_L))$, with $n_j=\dim V_j\geq 2$.

{\rm (ii)} ${\mc H}=S^L(\mc H_1)$, $G=GL(\mc H_1)$, $\XX=\mathrm{Ver}_L(\PP(\mc H_1))$, with $n=\dim \mc H_1\geq 2$.

{\rm (iii)} ${\mc H}=\bigwedge^L\mc H_1$, $G=GL(\mc H_1)$, $\XX=\mathrm{Pl}_L(Gr(L,\mc H_1))$, with $n=\dim \mc H_1\geq 6$.

Suppose that $L\geq 3$ and, only in the case {\rm (iii)}, also $n-L\leq 3$. Then there exist exceptional states in $\PP(\mc H)$. In other words, there exist states in $\PP(\mc H)$ which can be approximated by states of lower rank. Moreover, exceptional states already occur in the second secant variety $\sigma_2(\XX)$.
\end{theorem}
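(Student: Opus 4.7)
The plan is to reduce the theorem, in each of the three configurations, to a small family of low-dimensional base cases and then in each base case exhibit an explicit state of border rank at most $2$ whose honest rank is at least $3$. Such a state lies in $\sigma_2(\XX)$ and is exceptional, which also yields the final sentence of the theorem. The common construction is the \emph{tangent-state trick}: choose a smooth one-parameter family $t\mapsto v(t)$ of points of the affine cone $\hat{\XX}$ with $v(0)\neq 0$, and set
$$
\psi \;=\; v'(0) \;=\; \lim_{t\to 0}\frac{v(t)-v(0)}{t}.
$$
For each $t\neq 0$ the state $[\tfrac{1}{t}v(t)-\tfrac{1}{t}v(0)]$ has rank at most $2$, so $\ul{\rm rk}_\XX[\psi]\le 2$ automatically; it suffices to produce a family for which $\mathrm{rk}_\XX[\psi]\ge 3$.

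In case \textrm{(i)}, the base case is the three-qubit $W$-state of Section 2, which is the tangent at $t=0$ of $(e_1+te_2)^{\otimes 3}$ and has rank $3$ as verified there. To cover arbitrary $L\ge 3$ and $n_j\ge 2$, pick nonzero $f_j\in\mc H_j$ for $j=4,\dots,L$ and form $\Psi=W\otimes f_4\otimes\cdots\otimes f_L$. The map $\omega\mapsto\omega\otimes f_4\otimes\cdots\otimes f_L$ preserves both rank and border rank: one direction is obvious, and the reverse follows by contracting any decomposition of $\Psi$ against dual vectors $f_j^\ast$ in positions $4,\dots,L$. In case \textrm{(ii)}, take the monomial $M=e_1^{L-1}e_2\in S^L(\mc H_1)$, which equals $\tfrac{1}{L}\tfrac{d}{dt}\big|_{t=0}(e_1+te_2)^L$; then $\ul{\rm rk}(M)\le 2$ by the tangent trick, while Theorem \ref{Theo rank of monom} gives $\mathrm{rk}(M)=L\ge 3$ immediately.

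In case \textrm{(iii)}, the base case is $L=3$, $n=6$. Using the splitting $\CC^6=\langle e_1,e_2,e_3\rangle\oplus\langle e_4,e_5,e_6\rangle$, I take $\psi$ to be the tangent at $t=0$ of $(e_1+te_4)\wedge(e_2+te_5)\wedge(e_3+te_6)$, namely
$$
\psi \;=\; e_4\wedge e_2\wedge e_3 \;+\; e_1\wedge e_5\wedge e_3 \;+\; e_1\wedge e_2\wedge e_6.
$$
To show $\mathrm{rk}(\psi)\ge 3$, I intend to compute two $GL_6$-invariants. First, $\psi\wedge\psi=0$ (each cross term involves a repeated basis vector), which rules out the orbit of the Cayley trivector $e_1\wedge e_2\wedge e_3+e_4\wedge e_5\wedge e_6$. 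Second, the wedge-annihilator $\{u\in\CC^6:u\wedge\psi=0\}$ is trivial, as one checks by writing $u=\sum a_i e_i$ and observing that the nine $4$-forms appearing in $u\wedge\psi$ are linearly independent and force $a_1=\dots=a_6=0$; this rules out every orbit of the form $u\wedge\alpha$ with $\alpha$ a $2$-form, in particular the orbit of $e_1\wedge(e_2\wedge e_3+e_4\wedge e_5)$. Since any sum of two decomposable trivectors in $\wedge^3\CC^6$ falls into one of these two orbit types (or is itself decomposable), $\psi$ has rank at least $3$. To extend to general $L$ and $n$ allowed by the hypothesis, one embeds the base case by $\omega\mapsto\omega\wedge f_1\wedge\cdots\wedge f_{L-3}$ for auxiliary vectors $f_i$ extending the base $\CC^6$; contracting with the dual $(L-3)$-form preserves both rank and border rank, as in case (i).

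\textbf{Main obstacle.} The tangent-state construction and the embedding arguments are routine; the substantive content is the rank lower bound in each base case. For cases (i) and (ii) it is handed to us by Section 2 and Theorem \ref{Theo rank of monom}. Case (iii) is the one demanding genuine work: the $GL_6$-invariant analysis via $\psi\wedge\psi$ and the wedge-annihilator is my intended line of attack, with the Vinberg--Elashvili classification of $GL_6$-orbits on $\wedge^3\CC^6$ available as a back-up sanity check in case the direct invariant computation turns out to miss a borderline orbit.
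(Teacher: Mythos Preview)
Your overall architecture---reduce to a minimal base case and exhibit a tangent vector to $\hat{\XX}$ whose border rank is $\le 2$ but whose honest rank is $\ge 3$---matches the paper's, and your arguments are correct. The differences lie in the execution. The paper realizes the limit via explicit one-parameter subgroups $A(a)$ of $G$, whereas you package the same idea as the tangent-state trick; these are equivalent. In case (i) you only spell out the passage from $L=3$ to general $L$ by tensoring with fixed vectors; the passage from $\CC^2\otimes\CC^2\otimes\CC^2$ to $\mc H_1\otimes\mc H_2\otimes\mc H_3$ with $n_j\ge 2$ also needs a word (it follows from the same contraction/projection idea, and is exactly the content of the paper's Reduction Lemma). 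The most substantive divergence is in case (iii). For the reduction, the paper goes $\bigwedge^L\CC^n\supset\bigwedge^L\CC^{L+3}\cong\bigwedge^3\CC^{L+3}\supset\bigwedge^3\CC^6$ using the duality of Remark~\ref{Rem Wedge^k=Wedge^n-k}, while your wedge-embedding $\omega\mapsto\omega\wedge f_1\wedge\cdots\wedge f_{L-3}$ avoids the duality step altogether; both are valid, yours is arguably cleaner. For the rank lower bound on the base trivector, the paper invokes Ness's theorem on critical points of the moment-map norm to separate the $G$-orbits of $\psi$ and of the Cayley trivector, whereas you use the purely algebraic invariants $\psi\wedge\psi$ and $\mathrm{Ann}(\psi)=\{u:u\wedge\psi=0\}$. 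Your route is more elementary and self-contained; the paper's has the virtue of tying into the moment-map framework used elsewhere in the article. One small point to tighten: the assertion that every rank-$2$ element of $\bigwedge^3\CC^6$ is either in the Cayley orbit or of the form $u\wedge\alpha$ deserves the one-line justification via $\dim(V_1\cap V_2)\in\{0,1,\ge 2\}$ for the two $3$-planes $V_1,V_2$ supporting the summands (this is exactly what the paper does), rather than being left to the Vinberg--Elashvili list.
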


\begin{rem} The assumption $L\geq 3$ and, in the case (iii), $n-L\leq 3$, is equivalent to the assumption that the action of $G$ on ${\mc H}$ is not spherical. Thus, the above theorem can be phrased as: non-sphericity implies presence of exceptional states. Combined with theorem \ref{theol->} this implies that, for the cases of interest, sphericity is equivalent to the lack of exceptional states, thus proving theorem \ref{Theo Spher<->no except}.
\end{rem}

\begin{proof} We proceed by finding some basic examples of exceptional states in the smallest non-spherical representations and then reducing calculation of rank in the general case to these small-dimensional cases. The proof of the reduction lemma and the details of the examples are deferred to the subsequent subsections.

{\it Case} (i): Let $U_1,U_2,U_3$ be 2-dimensional subspaces of $\mc H_1,\mc H_2,\mc H_3$, respectively. Let $U_j\subset \mc H_j$ be a 1-dimensional subspace for $j=4,...,L$. Let $\tilde{\mc H} = U_1\otimes...\otimes U_L\subset {\mc H}$. Then we fall in the situation described in Section \ref{Sec Reduction} and we shall use the notation introduced therein, in particular $\tilde{\XX}=\XX\cap\PP(\tilde{\mc H})$. So, by Lemma \ref{Lemma Reduction}, for states $[\psi]\in\PP(\tilde{\mc H})$ we have ${\rm rk}_\XX[\psi]={\rm rk}_{\tilde{\XX}}[\psi]$. Now, notice that $\tilde{\mc H} = U_1\otimes U_2\otimes U_3$, because the 1-dimensional factors can be dropped out. We know, from the example given in section \ref{Sec Example 3 qubits} (or from Section \ref{Sec Motiv 3 qubits}), that the $W$-state has rank 3 but can be approximated by states of rank 2. Thus the $W$-state is exceptional in $\tilde{\mc H}$ and, consequently, in ${\mc H}$.

{\it Case} (ii): Here we can directly apply the result of the example computed in Section \ref{Sec Example Bosons}.

{\it Case} (iii): Let $U\subset \mc H_1$ be any subspace of dimension $L+3$. Let $\tilde{\mc H}=\bigwedge^LU$. Then $\tilde{\mc H}$ is naturally included as a subspace of ${\mc H}$ and we fall again in the situation treated in Section \ref{Sec Reduction}. Put $\tilde{\XX}=\XX\cap\PP(\tilde{\mc H})$. We can apply Lemma \ref{Lemma Reduction} and so we know that for $[\psi]\in\PP(\tilde{\mc H})$ we have ${\rm rk}_\XX[\psi]={\rm rk}_{\tilde{\XX}}[\psi]$. Thus it is sufficient to find exceptional states in $\tilde{\mc H}$. Notice that, since $\dim U=L+3$, we have $\bigwedge^LU\cong \bigwedge^3U$ and the rank functions agree (see Remark \ref{Rem Wedge^k=Wedge^n-k}). Since $\dim U\geq6$, we can choose a 6-dimensional subspace $Y\subset U$. Put $\bar{\mc H}=\bigwedge^3 Y$. Then $\bar{\mc H}$ is a subspace of $\bigwedge^3U$ to which we can apply once again the reduction procedure from Section \ref{Sec Reduction}. It is thus sufficient to find exceptional states in $\bigwedge^3\CC^6$. This is done in Section \ref{Sec
Example 3 fermions}.
\end{proof}

\subsubsection{Reduction to lower dimensional cases}\label{Sec Reduction}

In this section, we consider particular cases of a state space ${\mc H}$ and a subspace $\tilde{\mc H}\subset{\mc H}$. There is a natural embedding $\PP(\tilde{\mc H})\subset\PP(\mc H)$. We have the variety of coherent states $\XX\subset\PP(\mc H)$ for the first system. The variety of coherent states for the subsystem is of the form $\tilde{\XX}=\XX\cap\PP(\tilde{\mc H})$. Now for any state $[\psi]\subset\PP(\tilde{\mc H})$ we have two notions of rank: with respect to $\XX$ or $\tilde{\XX}$. We shall prove that, in our three cases of interest, these ranks agree, i.e. ${\rm rk}_\XX[\psi]={\rm rk}_{\tilde{\XX}}[\psi]$. This will enable us to compute rank for some basic low-dimensional systems and deduce results for much more general situations. We start by defining explicitly the spaces ${\mc H}$ and $\tilde{\mc H}$ we want to consider for distinguishable particles, bosons and fermions.\\

{\it Case 1: Distinguishable particles.} Let $\mc H_1,...,\mc H_L$ be finite dimensional complex vector spaces and ${\mc H}=\mc H_1\otimes...\otimes \mc H_L$. Let $U_j\subset \mc H_j$ be a subspace, for $j=1,...,L$, and $\tilde{\mc H}=U_1\otimes...\otimes U_L$. Then we have natural embeddings $\tilde{\mc H}\subset{\mc H}$ and $\PP(\tilde{\mc H})\subset\PP(\mc H)$. The respective varieties of coherent states are $\XX=\mathrm{Segre}(\PP(\mc H_1)\times...\times\PP(\mc H_L))\subset\PP(\mc H)$ and $\tilde{\XX}=\mathrm{Segre}(\PP(U_1)\times...\times\PP(U_L))\subset\PP(\tilde{\mc H})$, these are the varieties of decomposable tensors in the two systems. We have $\tilde{\XX}=\XX\cap\PP(\tilde{\mc H})$.\\

{\it Case 2: Bosons.} Let $\mc H_1$ be a finite dimensional vector space and ${\mc H}=S^L(\mc H_1)$. Let $U\subset \mc H_1$ be a subspace and $\tilde{\mc H}=S^L(U)$. We have natural embeddings $\tilde{\mc H}\subset{\mc H}$ and $\PP(\tilde{\mc H})\subset\PP(\mc H)$. The varieties of coherent states are $\XX=\mathrm{Veronese}_L(\PP(\mc H_1))\subset\PP(\mc H)$ and $\tilde{\XX}=\mathrm{Veronese}_L(\PP(U))\subset\PP(\tilde{\mc H})$. We have $\tilde{\XX}=\XX\cap\PP(\tilde{\mc H})$.\\

{\it Case 3: Fermions.} Let $\mc H_1$ be a finite dimensional complex vector space and ${\mc H}=\bigwedge^L \mc H_1$. Let $U\subset \mc H_1$ be a subspace with $\dim U\leq L$ and let $\tilde{\mc H}=\bigwedge^L U$. We have natural embeddings $\tilde{\mc H}\subset{\mc H}$ and $\PP(\tilde{\mc H})\subset\PP(\mc H)$. The varieties of coherent states are the Pl\"ucker embeddings of the Grassmann varieties $\XX=\mathrm{Pl}(Gr(L,\mc H_1))\subset\PP(\mc H)$ and $\tilde{\XX}=\mathrm{Pl}(Gr(L,U))\subset\PP(\tilde{\mc H})$. We have $\tilde{\XX}=\XX\cap\PP(\tilde{\mc H})$.\\

\begin{lemma}\label{Lemma Reduction}
Suppose we are in one of the three situations described above. Let $[\psi]\in\PP(\tilde{\mc H})$. Then
$$
{\rm rk}_\XX[\psi] = {\rm rk}_{\tilde{\XX}}[\psi] \;.
$$
\end{lemma}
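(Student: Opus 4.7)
The inequality $\mathrm{rk}_\XX[\psi]\leq \mathrm{rk}_{\tilde\XX}[\psi]$ is immediate, since $\tilde\XX\subseteq\XX$ and so any decomposition of $\psi$ into elements of $\hat{\tilde\XX}$ is also a decomposition into elements of $\hat\XX$. The content of the lemma is the opposite inequality. My plan is the standard \emph{projection trick}: I will construct a linear map $\pi:{\mc H}\to\tilde{\mc H}$ such that
\begin{enumerate}
\item[(a)] $\pi\big|_{\tilde{\mc H}}=\mathrm{id}_{\tilde{\mc H}}$, and
\item[(b)] $\pi(\hat\XX)\subseteq \hat{\tilde\XX}\cup\{0\}$.
\end{enumerate}
Once such a $\pi$ is available, the argument is immediate: pick a minimal decomposition $\psi=x_1+\cdots+x_r$ with $[x_j]\in\XX$, where $r=\mathrm{rk}_\XX[\psi]$. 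Applying $\pi$ and using (a) on the left and linearity on the right gives $\psi=\pi(\psi)=\pi(x_1)+\cdots+\pi(x_r)$. By (b), each $\pi(x_j)$ is either zero or lies in $\hat{\tilde\XX}$, so after discarding zero summands we have expressed $\psi$ as a sum of at most $r$ elements of $\hat{\tilde\XX}$. Hence $\mathrm{rk}_{\tilde\XX}[\psi]\leq r$.

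To construct $\pi$ I would, in each of the three cases, fix a linear complement and take the naturally induced projection on tensors, symmetric powers or exterior powers. In Case~1, choose complements $\mc H_j=U_j\oplus W_j$ and let $p_j:\mc H_j\to U_j$ be the projection along $W_j$; set $\pi=p_1\otimes\cdots\otimes p_L$. In Case~2, choose a complement $\mc H_1=U\oplus W$ with projection $p:\mc H_1\to U$, and set $\pi=S^L(p):S^L(\mc H_1)\to S^L(U)$. In Case~3, with the same $p$, set $\pi=\bigwedge^L p:\bigwedge^L \mc H_1\to\bigwedge^L U$. In every case (a) follows because $p_j$ (resp.~$p$) restricts to the identity on $U_j$ (resp.~$U$), and this property is inherited by tensor, symmetric and exterior powers. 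Property (b) is verified on generators: a simple tensor $v_1\otimes\cdots\otimes v_L$ maps to $p_1(v_1)\otimes\cdots\otimes p_L(v_L)$, a symmetric power $v^L$ maps to $p(v)^L$, and a decomposable wedge $v_1\wedge\cdots\wedge v_L$ maps to $p(v_1)\wedge\cdots\wedge p(v_L)$; in each case the image vanishes as soon as one factor lies in the chosen complement, and otherwise it is a simple tensor, pure power, or decomposable wedge in $\tilde{\mc H}$, i.e.\ an element of $\hat{\tilde\XX}$.

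There is no real obstacle here — the only thing to check is that these induced maps are well-defined and that they act on $\hat\XX$ as described, which is a standard functoriality statement for tensor, symmetric, and exterior powers. The one case to read carefully is Case~3, where the hypothesis should be understood as $\dim U\geq L$ (otherwise $\tilde{\mc H}=0$ and the lemma is vacuous); under this assumption $\bigwedge^L p$ sends a decomposable $L$-vector built from a basis of an $L$-plane in $\mc H_1$ either to $0$ (if the projected vectors become linearly dependent) or to a decomposable $L$-vector spanning an $L$-plane in $U$, i.e.\ a point of $\hat{\mathrm{Pl}(Gr(L,U))}=\hat{\tilde\XX}$. This completes the reduction.
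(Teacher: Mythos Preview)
Your proof is correct and is essentially the same as the paper's: both take a minimal decomposition in $\hat\XX$ and push it into $\hat{\tilde\XX}$ by projecting each summand along a chosen complement of the subspace. The paper carries this out in coordinates (fixing a basis $u_1,\dots,u_m,v_1,\dots,v_{n-m}$ of $\mc H_1$ adapted to $U$ and truncating the $v$-components), whereas you phrase the same operation functorially as $p_1\otimes\cdots\otimes p_L$, $S^L(p)$, or $\bigwedge^L p$; your observation that the Case~3 hypothesis should read $\dim U\geq L$ is also correct.
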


\begin{proof} Since the proofs in the three cases are completely analogous, we shall only treat the case of bosons. Denote $r={\rm rk}_\XX[\psi]$ and let
$$
\psi = x_1 + ... + x_r \;, \quad [x_l]\in\XX \;,
$$
be a minimal expression for $\psi$. We have $x_l=y_l^L$ for some $y_l\in \mc H_1$. We want to show that the $[x_l]$'s may actually be chosen in $\tilde{\XX}$, which means that the $y_l$'s can be chosen in $U$.

Let $u_1,...,u_m,v_{1},...,v_{n-m}$ be a basis of $\mc H_1$ such that $u_1,...,u_m$ is a basis of $U$. We can write the $y_l$ in this basis, say
$$
y_l = \sum\limits_{i=1}^{m} a_l^i u_i + \sum\limits_{j=1}^{n-m} b_l^j v_j \;.
$$
We can use these expressions to write $\psi$ in terms of the monomial basis of $S^L(\mc H_1)$. Since $\psi\in S^L(U)$, the final expression for $\psi$ will have nonzero coefficients only in front of the monomials in the $u_j$'s, no $v_j$'s will be included. But now, observe that one gets the same result if instead of $y_l$ one takes
$$
z_l = \sum\limits_{i=1}^{m} a_j^i u_i \;.
$$
So, we have
$$
z_1^k+...+z_r^k = y_1^k+...y_r^k = \psi \;.
$$
Setting $x_l'=z_l^k$, we get $[x_l']\in\tilde{\XX}$ and $\psi=x_1'+...+x_r'$ as desired.
\end{proof}

\subsubsection{Basic case for distinguishable particles: $\CC^2\otimes\CC^2\otimes\CC^2$}\label{Sec Example 3 qubits}

Here we construct an explicit approximation by states of rank 2 for the $W$-state of a system of three qubits. Let $v_1,v_2$ be a basis of $\CC^2$. In fact, the state which is local unitary equivalent to $W$ (so of the same rank as $W$) is obtained as a limit point of the action of some one-parameter subgroup of $GL_2(\CC)$. The one-parameter subgroup is
$$
A(a) =\frac{1}{2} \begin{pmatrix} a+a^{-1} & a-a^{-1} \\ a-a^{-1} & a+a^{-1} \end{pmatrix}\,,\quad a\in\CC^\times \;,
$$
which may also be written as
$$
A(a) = g_0 A_1(a) g_0^{-1} ,\; \textrm{where} \;\; A_1(a)=\begin{pmatrix} a  & 0 \\ 0 & a^{-1} \end{pmatrix}\in GL_2(\CC) \;,\; g_0=\frac{1}{\sqrt 2}\begin{pmatrix} 1  & -1 \\ 1 & 1 \end{pmatrix}\in SU(2) \;.
$$
A direct calculation shows that we have the following convergence in $\PP(\CC^2\otimes\CC^2\otimes\CC^2)$.
\begin{gather*}
A(a)^{\otimes 3}[v_1\otimes v_1\otimes v_1 + v_2\otimes v_2\otimes v_2] \stackrel{a\rw 0}{\lw} g_0^{\otimes 3}[v_1\otimes v_2\otimes v_2 + v_2\otimes v_1\otimes v_2 + v_2\otimes v_2\otimes v_1]\,.
\end{gather*}
So the state $g_0W$, and hence $W$, can be approximated by states of rank $2$. On the other hand, it can be checked by direct calculation that ${\rm rk}[W]=3$. Thus $W$ is an exceptional state. For an argument which uses orbit structure see Remark \ref{w-analogous}.

\subsubsection{Bosons: $S^L(\CC^{n})$}\label{Sec Example Bosons}

Here we show that if $L\geq3$, then the system $S^L(\CC^n)$ of $L$ bosons has exceptional states.

Let $v_1,...,v_n$ be a basis of $\CC^n$, so that the monomials of degree $L$ in the $v_j$ form a basis of ${\mc H}=S^L\CC^n$. Let $\XX=\mathrm{Ver}_L(\PP^{n-1})\subset\PP(\mc H)$. Let
\begin{equation}
\psi=v_1^L+(-1)^{L+1}v_n^L \;.
\end{equation}
Consider the one-parameter subgroup of $G=GL_n(\CC)$ given by
$$
A(a) = \begin{pmatrix} \frac{a+a^{-1}}{2} & 0 & \frac{a-a^{-1}}{2} \\
                       0 & \II_{n-2} & 0 \\
                       \frac{a-a^{-1}}{2} & 0 & \frac{a+a^{-1}}{2}
       \end{pmatrix}\,,\quad
       a\in\CC^\times \,,
$$
We have $A(a)=g_0A_1(a)g_0^{-1}$, where
$$
A_1(a) = \begin{pmatrix} a & 0 & 0 \\
                       0 & \II_{n-2} & 0 \\
                       0 & 0 & a^{-1}
       \end{pmatrix} \in SL_n\CC \quad,\quad
g_0 = \begin{pmatrix} \frac{1}{\sqrt{2}} & 0 & -\frac{1}{\sqrt{2}} \\
                       0 & \II_{n-2} & 0 \\
                       \frac{1}{\sqrt{2}} & 0 & \frac{1}{\sqrt{2}}
       \end{pmatrix}.
$$
A simple application of the binomial formula shows that
$$
A(a)[\psi] \stackrel{a\rw 0}{\lw} g_0 [v_1v_n^{L-1}] \quad {\rm in} \quad \PP(\mc H) \;.
$$
Hence
$$
\ul{\rm rk}[v_1v_n^{L-1}] = 2 \;.
$$
But, according to \cite{Coma-Seig-2011} or \cite{CarCatGer-2012-Waring} (the relevant result is stated here as Theorem \ref{Theo rank of monom}), we have
$$
{\rm rk}[v_1v_n^{L-1}] = L \;.
$$
Thus
$$
L>2 \quad \Lw \quad {\rm rk} \ne \ul{\rm rk} \;.
$$
Finally, note that $[v_1v_n^{L-1}]$ can be regarded as a $W$-type state of $L$ bosons, i.e. a symmetric state in which $L-1$ out of $L$ bosons are in the highest excited state.

\subsubsection{Basic case for fermions: $\bigwedge^3\CC^6$}\label{Sec Example 3 fermions}

Let $G=SL_6(\CC)$, ${\mc H}=\bigwedge^3\CC^6$, $\XX=\mathrm{Pl}(Gr_3(\CC^6))\subset\PP(\mc H)$. We shall show that $\PP(\mc H)$ contains exceptional states with respect to $\XX$.

Let $\PP=\PP(\mc H)$. We have $\dim\XX=9$ and $\dim\PP=19=2\dim\XX+1$. The ring of invariants is isomorphic to a polynomial ring in one variable, $\CC[\mc H]^G=\CC[F]$ and the generator $F$ is a polynomial with $\deg F=4$ (cf. \cite{Kac-1980} Table II). Let $Z\subset\PP$ be the quartic hypersurface defined by the vanishing of $F$. Then $Z$ is a $G$-invariant variety and $\XX\subset Z\subset\PP$.

Let ${\mc N}_\alpha=\{\psi\in{\mc H}:F(\psi)=\alpha\}$, for $\alpha\ne 0$. Proposition 3.3 in \cite{Kac-1980} states that ${\mc N}_\alpha$ is a single $G$-orbit. It follows that $G$ acts transitively on $\PP\setminus Z$. Hence the latter set consists exactly of the generic states of rank 2. It remains to understand the orbit structure in $Z$.

We consider now the stats of rank 2 in ${\mc H}$, and show that there are two types of such states. Any such state can be written as
$$
\psi = v_1\wedge v_2\wedge v_3 + v_4\wedge v_5\wedge v_6 \;,
$$
with some $v_j\in{\mc H}$. The first possibility is that $v_1,...,v_6$ form a basis of $\CC^6$. This is indeed the generic situation. If suitable Borel and Cartan subgroups of $SL_6$ are chosen, the two summands of $\psi$ are, respectively, the highest and lowest weight vectors in ${\mc H}$. The group $GL_6$ acts transitively on the set of all bases of $\CC^6$; the group $SL_6$ acts transitively on the set of their projective images. Thus the states of the first type form a single $G$-orbit $\XX_2'=\PP\setminus Z$. The second possibility is to have
$$
\dim (\rm{span}\{v_1,v_2,v_3\}\cap \rm{span}\{v_4,v_5,v_6\}) = 1 \;.
$$
If this is the case, by changing the vectors if necessary, we may reduce to the situation where $v_1=v_4$ and
$$
\psi= v_1\wedge (v_2\wedge v_3 + v_5\wedge v_6) \;,\quad {\rm with} \quad \rm{span}\{v_2,v_3\}\cap \rm{span}\{v_5,v_6\} = 0 \;.
$$
Since $v_2\wedge v_3 + v_5\wedge v_6$ has rank 2 in $\bigwedge^2\CC^6$ (with respect to $Gr(2,\CC^6)$), we deduce that $\psi$ has indeed rank 2 in ${\mc H}$. The state $\phi$ is not of the first type described above, because the action of $GL_6$ respects linear dependencies. On the other hand, it is also clear that $GL_6$ acts transitively on the set $\XX_2''$ of states of this second type, and hence $SL_6$ acts transitively on the set of their images in $\PP$. Note that if
$$
\dim (\rm{span}\{v_1,v_2,v_3\}\cap \rm{span}\{v_4,v_5,v_6\}) > 1 \;,
$$
then ${\rm rk}[\psi]=1$. We can conclude that there are exactly two $G$-orbits consisting of states of rank 2, namely
$$
\XX_2 = \XX_2' \sqcup \XX_2'',\quad \XX_2'=\PP\setminus Z, \quad \XX_2''=Z\cap\XX_2 \,.
$$
Now, we verify the presence of states of rank 3. Set
$$
\phi = v_{1,1}\wedge v_{1,2}\wedge v_{1,3} + v_{2,1}\wedge v_{2,2}\wedge v_{2,3} + v_{3,1}\wedge v_{3,2}\wedge v_{3,3} \;,
$$
with $v_{j,k}\in{\mc H}$. Denote $V_j=\rm{span}\{v_{j,1},v_{j,2},v_{j,3}\}$ for $j=1,2,3$. If $\dim(V_i\cap V_j)>1$ for some pair of distinct indices, then the corresponding two summands of $\phi$ can be joined into a single simple tensor, and hence $\phi$ has rank at most 2. If $\dim(V_i\cap V_j)=0$ for some pair of distinct indices, say $\dim(V_1\cap V_2)=0$, then either $\dim(V_1\cap V_3)>1$ or $\dim(V_2\cap V_3)>1$, and we are brought to the previous case. Suppose $\dim(V_i\cap V_j)=1$ for all pairs of distinct indices. Here is an example of such a state:
$$
\phi = v_1\wedge v_2\wedge v_4 - v_1\wedge v_3\wedge v_5 + v_2\wedge v_3\wedge v_6 \;,
$$
where $\{v_1,...,v_6\}$ is a basis of $\CC^6$. To show that this state has indeed rank 3, we propose the following argument. If $\phi$ had rank 2, then it must belong either to $\XX_2'$ or $\XX_2''$ according to the above construction. It does not belong to $\XX_2''$, because $V_1+V_2+V_3=\CC^6$, while the components of any state from $\XX_2''$ span a 5-dimensional subspace of $\CC^6$. For an analogous reason $[\phi]$ does not belong to $\XX$. To show that $\phi$ does not belong to $\XX_2'$, we simply observe that $\phi$ is the limit point of an orbit of a one-parameter subgroup of $G$, through a point in $\XX_2'$. Take
$$
[\psi] = [v_1\wedge v_2\wedge v_3 + v_4\wedge v_5\wedge v_6] \in \XX_2' \;.
$$
Consider the one-parameter subgroup of $GL_6$ given by
$$
A(a) = \frac{1}{8}\begin{pmatrix} a+a^{-1} & 0 & 0 & 0 & 0 & a-a^{-1} \\
                       0 & a+a^{-1} & 0 & 0 & a-a^{-1} & 0 \\
                       0 & 0 & a+a^{-1} & a-a^{-1} & 0 & 0 \\
                       0 & 0 & a-a^{-1} & a+a^{-1} & 0 & 0 \\
                       0 & a-a^{-1} & 0 & 0 & a+a^{-1} & 0 \\
                       a-a^{-1} & 0 & 0 & 0 & 0 & a+a^{-1}
       \end{pmatrix}\,,\quad
       a\in\CC^\times \,,
$$
It can also be written as $A(a)=g_0A_1(a)g_0^{-1}$, with the following $A_1(a)\in SL_6\CC$ and $g_0\in SU_6$:
$$
A_1(a) = \begin{pmatrix} a & 0 & 0 & 0 & 0 & 0 \\
                       0 & a & 0 & 0 & 0 & 0 \\
                       0 & 0 & a & 0 & 0 & 0 \\
                       0 & 0 & 0 & a^{-1} & 0 & 0 \\
                       0 & 0 & 0 & 0 & a^{-1} & 0 \\
                       0 & 0 & 0 & 0 & 0 & a^{-1}
       \end{pmatrix} \quad,\quad
g_0 = \frac{1}{\sqrt 8}\begin{pmatrix} 1 & 0 & 0 & 0 & 0 & -1 \\
                       0 & 1 & 0 & 0 & -1 & 0 \\
                       0 & 0 & 1 & -1 & 0 & 0 \\
                       0 & 0 & 1 & 1 & 0 & 0 \\
                       0 & 1 & 0 & 0 & 1 & 0 \\
                       1 & 0 & 0 & 0 & 0 & 1
       \end{pmatrix}\;.
$$
A direct calculation shows that
$$
A(a)[\psi] \stackrel{a\rw 0}{\lw} g_0[v_1\wedge v_2\wedge v_4 - v_1\wedge v_3\wedge v_5 + v_2\wedge v_3\wedge v_6]=g_0[\phi] \quad {\rm in} \quad \PP(\mc H) \;.
$$
Thus the state $[g_0\phi]$ is a limit point of the orbit of $A(a)$ through $[\psi]$. We want to show that $[g_0\phi]\notin \XX_2'$, which would imply that $[\phi]\notin\XX_2'$. To this end notice that both $\psi$ and $\phi$ are critical points of the norm of the momentum map which is given by $||\mu([v])||^2=\mathrm{tr}(\mu([v])^2)$. However, the spectra of their reduced density matrices are different. This means $\psi$ and $\phi$ are not local unitary equivalent, i.e. do not belong to the same $K$-orbit. It is known \cite{Ness84} that any $G$-orbit can contain at most one $K$-orbit with critical points of the norm of momentum map \cite{Ness84} (see also \cite{SOK12} for a discussion of $||\mu([v])||^2$ properties in the entanglement setting). Hence $\phi$ and $\psi$ belong to different $G$-orbits and $[\phi]\notin\XX_2'$.


To summarize, we have $[\phi]\notin\XX_2\cup\XX$. We can conclude that
$$
{\rm rk}[\phi]=3 \quad {\rm and} \quad \ul{\rm rk}[\phi] = 2 \;,
$$
which means that $\phi$ is an exceptional state. Note that the sate $\phi$ is a fermionic counterpart of the three qubit $W$ state.
\begin{rem}\label{w-analogous}
Notice that the same reasoning can be mutatis mutandis applied to the three qubit case to prove that ${\rm rk}[W]=3$. We have exactly one invariant, the so-called hyperdeterminant. Moreover, the action of $SL_2(\CC)^{\times 3}$ on $\PP(\CC^2\otimes \CC^2\otimes \CC^2)$ has a dense open orbit. It can be shown that all states of rank $2$ in $\CC^2\otimes \CC^2\otimes \CC^2$ belong to either the open orbit or one of the three orbits of bi-separable states. Clearly $W$ is not bi-separable. What is left is to show that it does not belong to the open orbit. To this end, once again, notice that $W$ and $\psi=v_1\otimes v_1\otimes v_1+ v_2\otimes v_2\otimes v_2$ are critical points of the norm of the momentum map. However, the spectra of their reduced density matrices are different. This means $W$ and $\phi$ are not local unitary equivalent, i.e. do not belong to the same $K$-orbit. But any $G$-orbit can contain at most one $K$-orbit with critical points of the norm of momentum map \cite{Ness84}. Hence $W$ and $\
psi$ belong to
different $G$-orbits and ${\rm rk}[W]=3$.
\end{rem}

\section{The exceptional states for systems with known $G$-orbit structure}

So far, we have established the existence of the exceptional states stemming from the second secant variety, $\sigma_2$. In this section we give a list of all exceptional states which appear for distinguishable particles when: (a) the number of $G$-orbits in $\mathbb{P}(\mathcal{H})$ is finite and (b) four qubits system, where the number of $G$ orbits is infinite but their structure is known. The review is based on \cite{CD07,Bucz-Lands-2011}.

\subsection{$\mathbb{P}(\mathcal{H})$ with finite number of G-orbits}

In this subsection we consider all systems of distinguishable particles for which the number of $G$-orbits in $\mathbb{P}(\mathcal{H})$ is finite. Note that since for $\mathcal{H}=(\mathbb{C}^2)^{\otimes 4}$ the number of $G$-orbits in $\mathbb{P}(\mathcal{H})$ is already infinite \cite{CD07} we can restrict to $3$-partite systems. The requirement of finite number of $G$-orbits reduces our considerations to two possibilities \cite{Parfenov-2001}
\begin{enumerate}
\item $\mathcal{H}=\mathbb{C}^2\otimes\mathbb{C}^2\otimes\mathbb{C}^N$, where $N\geq 2$, $G=SL_2(\CC)\times SL_2(\CC)\times SL_N(\CC)$
\item $\mathcal{H}=\mathbb{C}^2\otimes\mathbb{C}^3\otimes\mathbb{C}^N$, where $N\geq 3$, $G=SL_2(\CC)\times SL_3(\CC)\times SL_N(\CC)$
\end{enumerate}

\paragraph{Case 1}
As we already know, when $N=2$ there are six $G$-orbits. A complete description of the orbit structure for $N>2$ can be found in \cite{Bucz-Lands-2011} (see table 1). Interestingly, the number of orbits stabilizes for $N\geq4$. More precisely, when $N=3$ there are two additional $G$-orbits compared with $N=2$, i.e. altogether there are eight $G$-orbits. Starting from $N=4$, the number of $G$-orbits is nine and does not change with $N$. The rank and border rank of states belonging to each $G$-orbit have been calculated and can be found in table 1 of \cite{Bucz-Lands-2011}.  There is only one $G$-orbit containing exceptional states: the $G$-orbit through $W=\ket{100}+\ket{010}+\ket{001}$. We have ${\rm rk}[W]=3$ and \ul{\rm rk}W = 2.
\paragraph{Case 2}
For $\mathcal{H}=\mathbb{C}^2\otimes\mathbb{C}^3\otimes\mathbb{C}^N$ the orbit structure is much richer. The number of orbits of $G=SL_2(\mathbb{C})\times SL_3(\mathbb{C})\times SL_N(\mathbb{\CC})$ stabilizes for $N\geq 6$. More precisely

\begin{itemize}
	\item $N=3$: 18 $G$-orbits
	\item $N=4$: 23 $G$-orbits
	\item $N=5$: 25 $G$-orbits
	\item $N\geq 6$: 26 $G$-orbits
\end{itemize}
The complete list of these orbits together with their dimensions, exemplary states belonging to each one, ranks and border ranks can be found in tables 2, 3 and 4 of \cite{Bucz-Lands-2011}. Here we list the exceptional states, their rank, border rank and the smallest $N$ for which they appear

\begin{enumerate}
	\item $\Psi_1=\ket{100}+\ket{010}+\ket{001}$, ${\rm rk}\Psi_1=3$, $\ul{\rm rk}\Psi_1=2$, $N=2$,
	\item $\Psi_2=\ket{0}\otimes(\ket{00}+\ket{11})+\ket{1}\otimes (\ket{01}+\ket{22})$, ${\rm rk}\Psi_2=4$, $\ul{\rm rk}\Psi_2=3$, $N=3$,
	\item $\Psi_3=\ket{0}\otimes(\ket{00}+\ket{11}+\ket{22})+\ket{1}\otimes (\ket{01}+\ket{12})$, ${\rm rk}\Psi_3=4$, $\ul{\rm rk}\Psi_3=3$, $N=3$,
	\item $\Psi_4=\ket{0}\otimes(\ket{00}+\ket{11}+\ket{22})+\ket{101}$, ${\rm rk}\Psi_4=4$, $\ul{\rm rk}\Psi_4=3$, $N=3$,
	\item $\Psi_5=\ket{0}\otimes(\ket{00}+\ket{12})+\ket{1}\otimes (\ket{01}+\ket{22})$, ${\rm rk}\Psi_5=4$, $\ul{\rm rk}\Psi_5=3$, $N=3$,
	\item $\Psi_6=\ket{0}\otimes(\ket{00}+\ket{12}+\ket{23})+\ket{1}\otimes (\ket{01}+\ket{13})$, ${\rm rk}\Psi_6=5$, $\ul{\rm rk}\Psi_6=4$, $N=4$.
\end{enumerate}

Interestingly, these states correspond to nontrivial entanglement classes found recently  by means of the so-called sub-Schmidt decomposition by the authors of \cite{CT06}.

\subsection{Infinite number of orbits - four qubits}
The number of $G=SL_2(\CC)^{\times 4}$-orbits in $\mathbb{P}(\mathcal{H})$ where $\mathcal{H}=(\mathbb{C}^{2})^{\otimes 4}$ is infinite. Nevertheless, the orbit structure is explicitly known \cite{CD07,kunraty}. Note that, according to theorem \ref{Coro SecVar Cubits}, the minimal number of simple tensors in $(\CC^2)^{\otimes 4}$ necessary to express a generic tensor as a linear combination is the expected number $r_g= \left\lceil \frac{2^4}{4+1} \right\rceil=4$. It was proven by Bryli\'nski \cite{Brylinski02} that the maximal tensor rank of a four qubits state is $4$. Therefore, tensors of rank $4$ cannot approximate tensors of higher rank and the border rank of any exceptional $[\psi]\in \mathbb{P}(\mathcal{H})$ can be at most $3$. Consequently, the exceptional states can belong only to $\sigma_2$ and $\sigma_3$, i.e. they arise from the closure of states of rank $2$ and rank $3$. These states have been completely determined in \cite{CD07}.

\paragraph{States of rank 2 and $\sigma_2$}
By Proposition 5.1 of \cite{CD07} the states of rank $2$ belong to three $G$-orbits, with representatives:
\begin{enumerate}
  \item $\Psi_1=[\ket{0000}+\ket{1111}]$
  \item $\Psi_2=[\ket{0}\otimes(\ket{000}+\ket{111})]$
  \item $\Psi_3=[\ket{00}\otimes (\ket{00}+\ket{11})]$
\end{enumerate}
Therefore, the closures of $G.[\Psi_{1,2,3}]$ are the only source of exceptional states belonging $\sigma_2$. It is easy to see that $\overline{G.[\Psi_3]}\setminus G.[\Psi_3]$ contains only separable states. In $\overline{G.[\Psi_2]}$ we find the state $\phi_2=\ket{0}\otimes(\ket{001}+\ket{010}+\ket{100})$ which has rank $3$ and border rank $2$ and hence is exceptional. Note that this state is a tensor product of $\ket{0}$ and a three qubit $W$ state. The last exceptional state type stems for  $\overline{G.[\Psi_1]}$ and is a four qubit $W$ state, i.e. $[\ket{0001}+\ket{0010}+\ket{0100}+\ket{1000}]$.

\paragraph{States of rank 3 and $\sigma_3$}
The second and last source of exceptional states is $\sigma_3$. It was shown by Bryli\'nski \cite{Brylinski02} that $\sigma_3$ is an irreducible algebraic variety given as the zero locus of two out of the four generating invariant polynomials of the $G$-action on $\mathcal{H}$ (these polynomials are denoted by $L$ and $M$ in \cite{CD07}). Since the $G$-orbits in $\mathcal{H}$ are known explicitly \cite{CD07} one can single out those on which the above mentioned polynomials vanish. If in addition one knows representatives of the $G$-orbits of rank $2$ and  $3$, one can determine the exceptional states stemming from $\sigma_3$. Following this reasoning the authors of \cite{CD07} found that the closure of the set of states of rank at most three contains only one $G$-orbit of states of rank $4$. This is $G$-orbit of the four qubit $W$-state $[\ket{0001}+\ket{0010}+\ket{0100}+\ket{1000}]$.

The conlusion is that, for four qubits the only exceptional states are $G$-orbits through $\ket\otimes(\ket{001}+\ket{010}+\ket{100})$ and $\ket{0001}+\ket{0010}+\ket{0100}+\ket{1000}]$ which are three and four qubit $W$-states.

\section{Summary}
We have established a connection between spherical actions of a reductive groups and the border rank of a state - a typical notion in Secant varieties theory. More precisely we showed that sphericity implies that states of a given rank cannot be approximated by states of a lower rank, i.e. there are no exceptional states. For three, important form quantum entanglement perspective, cases of distinguishable, fermionic and bosonic particles we also show that non-sphericity implies the existence of the exceptional states. We showed that the corresponding exceptional states belong to non-bipartite entanglement classes of $W$-type. Finally, we concluded that the existence of the exceptional states is a state-type obstruction for deciding the local unitary equivalence of states by means of the one-particle reduced density matrices.

A desired result would be a classification of exceptional states for the considered three types of systems, that is, L bosons, fermions and distinguishable particles. Note that even in case of $L$ qubits such a classification is not known. All examples of exceptional states in $L$ qubit systems, known to us, are of $W$ type. It would interesting to find out if there are exceptional states of different types. In order to find new examples one should look at the system of at least five qubits. It is because ${\rm rk}W=L$ and the generic rank $\left\lceil \frac{2^L}{L+1} \right\rceil>L$, when $L\geq 5$. For example when $L=6$ the generic rank is $10$ and rank of $W$ is 6. It would be very surprising if the sequences of tensors of rank $6,7,8,9$ or $10$ do not give tensors of higher rank. In our opinion the fact that for four qubits the maximal rank and generic rank overlap, which greatly simplifies the problem, might be a specific low dimensional phenomenon. Note also that all exceptional states found in
the considered examples are unstable states in the sense of geometric invariant theory (GIT). This in turn means that they cannot be taken by SLOCC operations (G-action) to a state with maximally mixed reduced one-particle density matrices. The mutual relationship between classification of states with respect to rank or border rank and with respect to GIT stability seems to be very interesting problem from both physics and mathematics points of view which we intend to follow.

\section*{Acknowledgments}

We would like to thank Marek Ku\'s for useful discussions and critical reading of the manuscript and Alan Huckleberry for his support and interest in our work. AS would like to thank Matthias Christandl, Marek Ku\'s, Michal Oszmaniec and Michael Walter for inspiring discussions concerning exceptional states and GIT during his visit in ETH Zurich. We also gratefully acknowledge the support of SFB/TR12 Symmetries and Universality in Mesoscopic Systems program of the Deutsche Forschungsgemeischaft, and Polish MNiSW Iuventus Plus grant no. IP2011048471.

\bibliographystyle{plain}

\end{document}